\documentclass[onecolumn]{IEEEtran}
\usepackage[cmex10]{amsmath}
\usepackage{amsthm}
\usepackage{amstext}
\usepackage{amssymb}
\usepackage{graphicx}
\usepackage{colortbl}

\makeatletter

\providecommand{\tabularnewline}{\\}

\theoremstyle{definition}
\newtheorem{thm}{Theorem}
\ifx\proof\undefined
\newenvironment{proof}[1][\protect\proofname]{\par
\normalfont\topsep6\p@\@plus6\p@\relax
\trivlist
\itemindent\parindent
\item[\hskip\labelsep
\scshape
#1]\ignorespaces
}{%
\endtrivlist\@endpefalse
}
\providecommand{\proofname}{Proof}
\fi
\theoremstyle{definition}
\newtheorem{lem}{Lemma}
\theoremstyle{remark}

\theoremstyle{definition}
\newtheorem{defn}{Definition}


\interdisplaylinepenalty=2500

\usepackage{wasysym}

\@ifundefined{showcaptionsetup}{}{%
 \PassOptionsToPackage{caption=false}{subfig}}
\usepackage{subfig}
\makeatother

\newcommand{\diag}{\mathop{\mathrm{diag}}}

\newcommand{\vv}{\mathbf{v}}
\newcommand{\vu}{\mathbf{u}}

\begin{document}

\title{Implement Blind Interference Alignment over Homogeneous $3$-user $2\times 1$ Broadcast Channel}

\author{Qing F.~Zhou {\em Member, IEEE} , Q. T. Zhang, {\em Fellow,
IEEE} and Francis C. M. Lau, {\em Senior Member, IEEE}
\thanks{Qing F.~Zhou and Q. T. Zhnag are with the Department of Electronic Engineering,
City University of Hong Kong, Kowloon, Hong Kong.
Francis C.~M.~Lau is with the Department of Electronic
and Information Engineering, The Hong Kong Polytechnic University,
Kowloon, Hong Kong. (Email: enqfzhou@ieee.org, wirelessqt@gmail.com,
encmlau@polyu.edu.hk, ).}
\thanks{This work was supported by City University of Hong Kong under xxxxxxx. } }

\maketitle

\begin{abstract}
This paper first studies the homogeneous $3$-user $2\times 1$ broadcast channel (BC) with no CSIT. We show a sufficient condition for it to achieve the optimal $\frac{3}{2}$ degrees of freedom (DoF) by using Blind Interference Alignment (BIA). BIA refers to the interference alignment method without the need of CSIT. It further studies the $2\times 1$ broadcast network in which there are $K\geq 3$ homogeneous single-antenna users, and their coherence time offsets are independently and uniformly distributed. We show that, if $K\geq 11$, the two-antenna transmitter can find, with more than 95\% certainty, three users to form a BIA-feasible $3$-user BC and achieve the optimal $\tfrac{3}{2}$ DoF.
\end{abstract}

\begin{IEEEkeywords}
Blind interference alignment, DoF, homogeneous fading channel, MISO BC.
\end{IEEEkeywords}

\section{Introduction}

In interference channels with generic channel states, the implementation of Interference Alignment (IA) requires the instantaneous and global channel state information at transmitters (CSIT), such as in the seminal $K$-user interference channel \cite{Cadambe2008}. However, in practical transmission systems, the delay introduced by achieving CSIT via feedbacks from the receivers makes the instantaneous and global CSIT unrealistic. Technically speaking, the degree-of-freedom (DoF) region achieved by IA with the assumption of perfect CSIT only serves as an upper bound on achievable DoF.

Be more realistic, IA with delayed CSIT \cite{Maddah2012} and IA with no CSIT \cite{Jafar2009} are of more significance from engineering point of view. Usually reduced DoF is traded for relaxed requirement on CSIT. Consider a $K$-user $L\times 1$ broadcast channel (BC) in which a $L$-antenna transmitter delivers independent data streams to $K$ single-antenna users. When $K=L=2$, $2$ DoF can be achieved by beamforming if CSIT is available, while only $\tfrac{4}{3}$ DoF is achievable if only delayed CSIT is available. Surprisingly, the DoF of $\tfrac{4}{3}$ is also achievable by using IA with no CSIT if the BC contains 4 channel uses and has certain structure on its channel state matrix. By contrast, only $1$ DoF is achievable for the BC if one has no CSIT and uses no IA.

In this paper, we concentrate on IA with no CSIT, which is frequently referred to as Blind Interference Alignment (BIA) \cite{Jafar2009,Gou2011,Jafar2012}. Prior works in the literature only study what the limit of achievable DoF is by using BIA. For instance, it is shown that the optimal DoF of $\tfrac{LK}{L+K-1}$ is achievable for a $K$-user $L\times 1$ BC if it contains finite channel uses and the channel state matrix has certain structure \cite{Jafar2009,Jafar2012}. Whether the optimal DoF is achievable for a general $K$-user $L\times 1$ BC containing infinite channel uses, and how to achieve the optimal DoF are not well studied except our recent work on homogeneous $2$-user $2\times 1$ BC \cite{Zhou2012a}.

\begin{figure}
\begin{centering}
\includegraphics[scale=0.6]{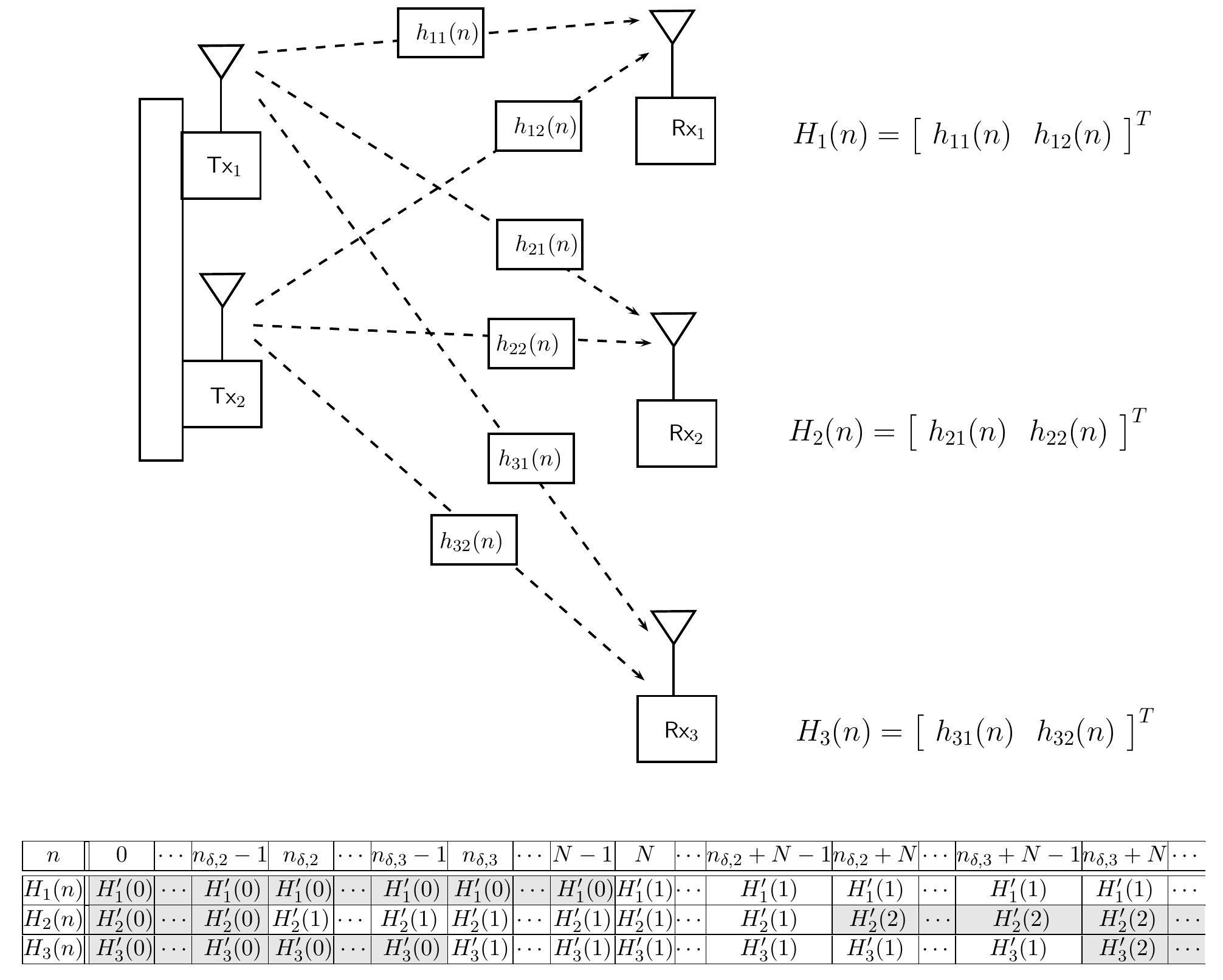}
\par\end{centering}
\caption{System model of a $3$-user $2\times 1$ BC.}
\label{fig:system.model}
\end{figure}

We further examine the condition for a homogeneous $3$-user $2\times 1$ BC to be BIA-feasible, more precisely, to achieve the optimal DoF $\tfrac{3}{2}$ by using BIA. Unlike homogeneous 2-user BC, which can be characterized by a single block offset between the two users \cite{Zhou2012a}, a homogeneous $3$-user BC needs at least two block offsets to parameterize. Due to the doubling of parameter number, it becomes tedious and formidable to analyze the $3$-user BC by using the method in \cite{Zhou2012a}. In particular, the method deals with the explicit decomposition of a $4\times 5N$ channel matrix with $N$ being the coherence time. To reduce the complexity, we propose a new analysis method, which simplifies the decomposition of the $4\times 5N$ channel matrix into the decomposition of a $12\times 12$ matrix regardless of the value of $N$. By using this new method, we identify a sufficient condition on block offsets for a homogeneous $3$-user BC to be BIA-feasible. If there are $K\geq 11$ homogeneous users in a $2\times 1$ broadcast network, we further show that it is highly possible that the two-antenna transmitter can find, from the $K$ users, three users to form a BIA-feasible $3$-user BC.

\section{System model}
A $3$-user $2\times 1$ broadcast channel (BC) consists of a two-antenna transmitter and three single-antenna users. In this paper, we consider a $3$-user $2\times 1$ BC in which the three users undertake independent block fading with the same coherence time, referred to as homogeneous $3$-user $2\times 1$ BC. As shown in Fig.~\ref{fig:system.model}, let $h_{ij}$ denote the channel coefficient of the link connecting transmit antenna $\mathsf{Tx}_{j}$, $j\in\{1,2\}$,
to user $\mathsf{Rx}_{i}$, $i\in\{1,2,3\}$. Further define coefficient vector $H_{i}(n)\triangleq [h_{i1}(n), h_{i2}(n)]^{T}$ for $\mathsf{Rx}_{i}$, where $n\geq 0$ is the discrete-time index. Represent the coherence time for $H_{i}(n)$ by $N$, which is the same for all users,  and the initial time offset for $H_{i}(n)$ by $n_{\delta,i}$. Without loss of generality, let $n_{\delta,1}=0$, and $0\leq n_{\delta,i}<N$ for $i\in\{2,3\}$.
So, as illustrated in Fig.~\ref{fig:system.model}, we can let $H'_i(a_i)$ denote the coefficient vector for user $\mathsf{Rx}_i$ at its $a_i$th coherence block. In specific, for user $\mathsf{Rx}_1$ we have $H_{1}(a_1N+b_{1})=H'_1(a_1)$ for all $a_1 \geq 0$ and $0\leq b_{1}<N$; for users $\mathsf{Rx}_2$ and $\mathsf{Rx}_3$, we have $H_{i}(b_{i})=H'_i(0)$ for $0\leq b_{i}\leq n_{\delta,i}-1$, and $H_{i}(a_i N+n_{\delta,i}+b_{i})=H'_i(a_i+1)$
for $a_i \geq 0$ and $0\leq b_i<N$, where $i\in\{ 2,3\}$.
%
%
%
%

\subsection{Interference alignment}
It is known that the optimal achievable DoF, also known as multiplexing gain, is $\frac{3}{2}$ for the $3$-user $2\times 1$ BC with no CSIT \cite{Gou2011}. As an illustration, Fig.~\ref{fig:IA.X.chnl} shows how interference alignment helps to achieve the optimal DoF.
In this figure, we consider four arbitrary time slots $n_{1}< n_{2}< n_{3} < n_{4}$, which are not necessarily consecutive. Their channel coefficients are given by
\begin{center}
\begin{tabular}{|c|c|c|c|}
\hline
$H_{1}(n_{1})$ & $H_{1}(n_{2})$ & $H_{1}(n_{3})$ & $H_{1}(n_{4})$\tabularnewline
\hline
$H_{2}(n_{1})$ & $H_{2}(n_{2})$ & $H_{2}(n_{3})$ & $H_{2}(n_{4})$\tabularnewline
\hline
$H_{3}(n_{1})$ & $H_{3}(n_{2})$ & $H_{3}(n_{3})$ & $H_{3}(n_{4})$\tabularnewline
\hline
\end{tabular}\quad.
\par\end{center}
Since each transmit antenna delivers one symbol on each time slot, the four-time-slot channel block is also referred to as a \emph{4-symbol channel block}.
Over the block, we denote channel coefficient matrix from transmit antenna $\mathsf{Tx}_{j}$ to user
$\mathsf{Rx}_{i}$ as $H_{ij}=\mathsf{diag}[h_{ij}(n_{1}),h_{ij}(n_{2}),h_{ij}(n_{3}),h_{ij}(n_{4})]$.
Suppose $\mathbf{v}_{1},\mathbf{v}_{2},\mathbf{v}_{3}\in\mathcal{C}^{4\times 1}$ are the signaling
vectors for $\mathsf{Tx}_{1}$; $\mathbf{u}_{1},\mathbf{u}_{2},\mathbf{u}_{3}\in\mathcal{C}^{4\times1}$
are the signaling vectors for $\mathsf{Tx}_{2}$. At user $\mathsf{Rx}_i$, the received signal vector $\mathbf{y}_i\in\mathcal{C}^{4\times 1}$ is
\begin{equation}
\mathbf{y}_i = H_{i1} [ \mathbf{v}_1,\mathbf{v}_2,\mathbf{v}_3 ] \mathbf{s}_1 + H_{i2}[ \mathbf{u}_1,\mathbf{u}_2,\mathbf{u}_3 ] \mathbf{s}_2 + \mathbf{z}_j
\end{equation}
where $\mathbf{s}_j = [s_{j1},s_{j2},s_{j3}]^T \in \mathcal{C}^{3\times 1}$ represents three independent data streams from transmit antenna $\mathsf{Tx}_j$, $\mathbf{z}_i \in \mathcal{C}^{4\times 1}$ is the AWGN vector at user $\mathsf{Rx}_i$.
\begin{figure}
\begin{centering}
\includegraphics[scale=0.7]{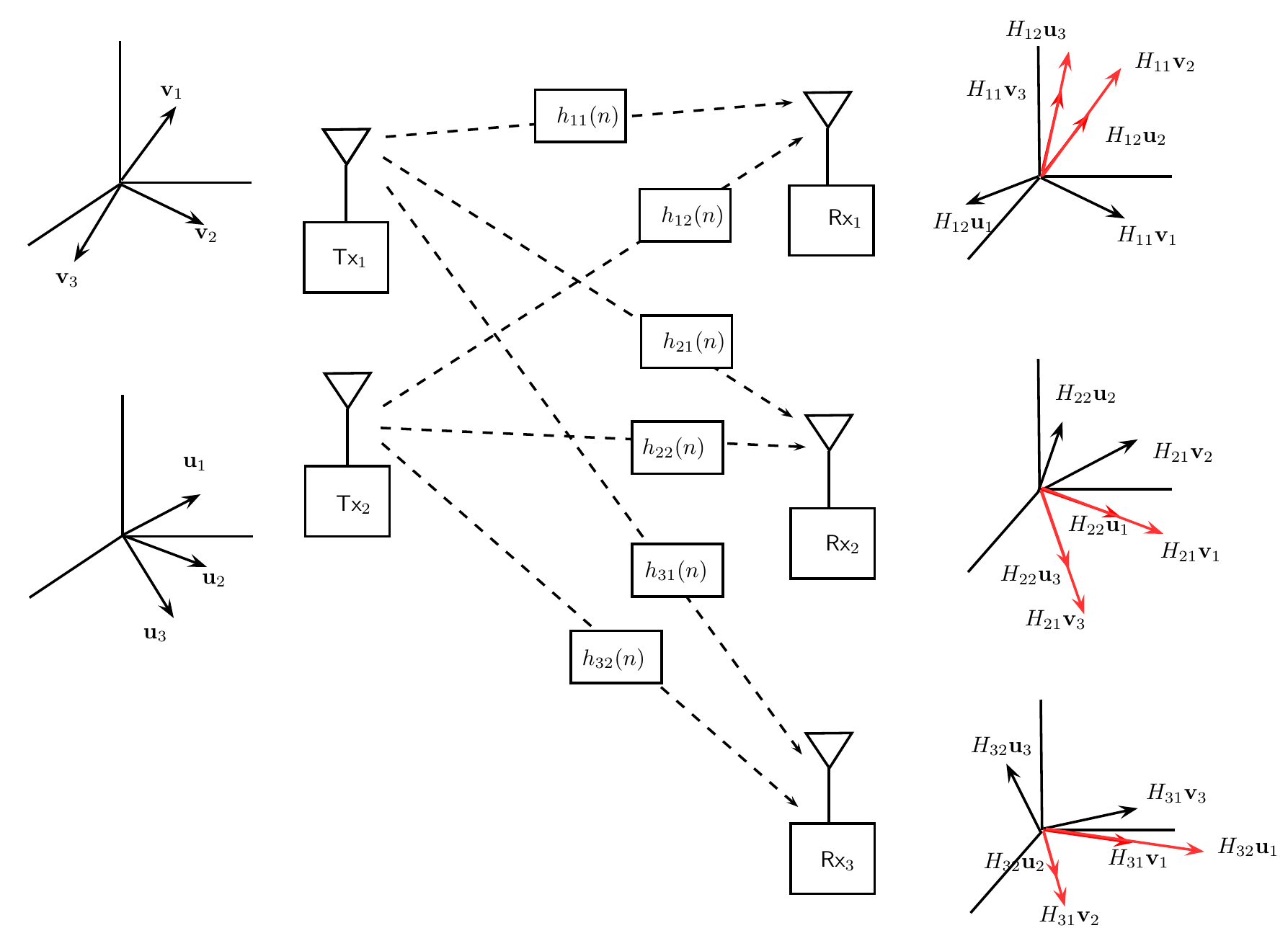}
\par\end{centering}
\caption{Interference alignment of a $3$-user $2\times 1$ BC. Note that the symbol streams $s_{ij}$ are omitted for simplicity.}
\label{fig:IA.X.chnl}
\end{figure}

To achieve the optimal DoF of $\tfrac{4}{3}$,
the interference alignment implementation shown in Fig.~\ref{fig:IA.X.chnl} requires
\begin{equation}
\begin{cases}
H_{11}\mathbf{v}_{2}\rightarrow H_{12}\mathbf{u}_{2}\\
H_{11}\mathbf{v}_{3}\rightarrow H_{12}\mathbf{u}_{3}
\end{cases}\label{eq:IA.cond.01}
\end{equation}
and %
\begin{equation}
\begin{cases}
H_{21}\mathbf{v}_{1}\rightarrow H_{22}\mathbf{u}_{1}\\
H_{21}\mathbf{v}_{3}\rightarrow H_{22}\mathbf{u}_{3}
\end{cases}\label{eq:IA.cond.02}
\end{equation}
and %
\begin{equation}
\begin{cases}
H_{31}\mathbf{v}_{1}\rightarrow H_{32}\mathbf{u}_{1}\\
H_{31}\mathbf{v}_{2}\rightarrow H_{32}\mathbf{u}_{2}
\end{cases}\label{eq:IA.cond.03}
\end{equation}
where $\mathbf{x}\rightarrow\mathbf{y}$ means that $\mathbf{x}$ aligns
with $\mathbf{y}$, that is, $\mathbf{x}=c\mathbf{y}$ for a nonzero
scalar $c$.
%
In this implementation, $\mathsf{Rx}_i$, $i\in\{1,2,3\}$, decodes the symbols delivered by $\mathbf{v}_i$ and $\mathbf{u}_i$, i.e., $s_{1i}$ and $s_{2i}$ from $\mathsf{Tx}_1$ and $\mathsf{Tx}_2$, respectively. In total, six symbols are delivered by four channel uses, so the DoF $\tfrac{6}{4}=\tfrac{3}{2}$ is achieved. The alignment criteria given above can be rewritten as \eqref{eq:sig.vec.cond}.
\begin{figure*}
\begin{equation}
\begin{cases}
\vv_{1} & \rightarrow\diag\left(\frac{h_{22}(n_{1})}{h_{21}(n_{1})},\frac{h_{22}(n_{2})}{h_{21}(n_{2})},\frac{h_{22}(n_{3})}{h_{21}(n_{3})},\frac{h_{22}(n_{4})}{h_{21}(n_{4})}\right)\vu_{1}\rightarrow\diag\left(\frac{h_{32}(n_{1})}{h_{31}(n_{1})},\frac{h_{32}(n_{2})}{h_{31}(n_{2})},\frac{h_{32}(n_{3})}{h_{31}(n_{3})},\frac{h_{32}(n_{4})}{h_{31}(n_{4})}\right)\vu_{1}\\
\vv_{2} & \rightarrow\diag\left(\frac{h_{12}(n_{1})}{h_{11}(n_{1})},\frac{h_{12}(n_{2})}{h_{11}(n_{2})},\frac{h_{12}(n_{3})}{h_{11}(n_{3})},\frac{h_{12}(n_{4})}{h_{11}(n_{4})}\right)\vu_{2}\rightarrow\diag\left(\frac{h_{32}(n_{1})}{h_{31}(n_{1})},\frac{h_{32}(n_{2})}{h_{31}(n_{2})},\frac{h_{32}(n_{3})}{h_{31}(n_{3})},\frac{h_{32}(n_{4})}{h_{31}(n_{4})}\right)\vu_{2}\\
\vv_{3} & \rightarrow\diag\left(\frac{h_{12}(n_{1})}{h_{11}(n_{1})},\frac{h_{12}(n_{2})}{h_{11}(n_{2})},\frac{h_{12}(n_{3})}{h_{11}(n_{3})},\frac{h_{12}(n_{4})}{h_{11}(n_{4})}\right)\vu_{3}\rightarrow\diag\left(\frac{h_{22}(n_{1})}{h_{21}(n_{1})},\frac{h_{22}(n_{2})}{h_{21}(n_{2})},\frac{h_{22}(n_{3})}{h_{21}(n_{3})},\frac{h_{22}(n_{4})}{h_{21}(n_{4})}\right)\vu_{3}
\end{cases}. \label{eq:sig.vec.cond}
\end{equation}
\end{figure*}

\subsection{BIA-feasible channel block}

In this part, we investigate the way to meet the interference alignment condition \eqref{eq:sig.vec.cond}
when we have no CSIT but the knowledge of the coherence time $N$ and
the offsets $n_{\delta,i}$, $i\in\{1,2,3\}$ at the transmitter. This type of interference alignment requiring no CSIT is referred to as BIA \cite{Gou2011}.

According to the results in \cite{Gou2011}, when a 4-symbol channel block presents the following channel coefficient pattern
\begin{center}
\begin{tabular}{|c|c|c|c|}
\hline
$H_{1}'(\beta)$ & \cellcolor[gray]{0.9}$H_{1}'(\alpha)$ & $H_{1}'(\beta)$ & $H_{1}'(\beta)$\tabularnewline
\hline
$H_{2}'(\gamma)$ & $H_{2}'(\gamma)$ & \cellcolor[gray]{0.9}$H_{2}'(\psi)$ & $H_{2}'(\gamma)$\tabularnewline
\hline
$H_{3}'(\rho)$ & $H_{3}'(\rho)$ & $H_{3}'(\rho)$ & \cellcolor[gray]{0.9}$H_{3}'(\pi)$\tabularnewline
\hline
\end{tabular}\quad,
\par\end{center}
it is feasible for BIA to achieve the optimal $\tfrac{3}{2}$. For simplicity of presentation, we say the block is \emph{BIA-feasible}. In any homogeneous $3$-user $2\times 1$ BC, one cannot find four time slots to form a BIA-feasible block with the channel pattern above. Fortunately, the channel pattern does not form the necessary condition for a 4-symbol channel block to be BIA-feasible, other channel patterns can also form BIA-feasible 4-symbol channel blocks. For example, a 4-symbol channel block is BIA-feasible if it has the following channel pattern:
\begin{center}
\begin{tabular}{|c|c|c|c|}
\hline
\cellcolor[gray]{0.9}$H_{1}'(\alpha)$ & \cellcolor[gray]{0.9}$H_{1}'(\alpha)$ & $H_{1}'(\beta)$ & $H_{1}'(\beta)$\tabularnewline
\hline
\cellcolor[gray]{0.9}$H_{2}'(\gamma)$ & $H_{2}'(\psi)$ & $H_{2}'(\psi)$ & $H_{2}'(\psi)$\tabularnewline
\hline
\cellcolor[gray]{0.9}$H_{3}'(\rho)$ & \cellcolor[gray]{0.9}$H_{3}'(\rho)$ & \cellcolor[gray]{0.9}$H_{3}'(\rho)$ & $H_{3}'(\pi)$\tabularnewline
\hline
\end{tabular}
\par\end{center}
To see its BIA-feasibility, we substitute the channel state into \eqref{eq:sig.vec.cond} and get the alignment conditions given by \eqref{eq:ex.pattern}.
\begin{figure*}
\begin{equation}
\begin{cases}
\vv_{1} & \rightarrow\diag\left(\frac{h_{22}'(\gamma)}{h_{21}'(\gamma)},\frac{h_{22}'(\psi)}{h_{21}'(\psi)},\frac{h_{22}'(\psi)}{h_{21}'(\psi)},\frac{h_{22}'(\psi)}{h_{21}'(\psi)}\right)\vu_{1}\rightarrow\diag\left(\frac{h_{32}'(\rho)}{h_{31}'(\rho)},\frac{h_{32}'(\rho)}{h_{31}'(\rho)},\frac{h_{32}'(\rho)}{h_{31}'(\rho)},\frac{h_{32}'(\pi)}{h_{31}'(\pi)}\right)\vu_{1}\\
\vv_{2} & \rightarrow\diag\left(\frac{h_{12}'(\alpha)}{h_{11}'(\alpha)},\frac{h_{12}'(\alpha)}{h_{11}'(\alpha)},\frac{h_{12}'(\beta)}{h_{11}'(\beta)},\frac{h_{12}'(\beta)}{h_{11}'(\beta)}\right)\vu_{2}\rightarrow\diag\left(\frac{h_{32}'(\rho)}{h_{31}'(\rho)},\frac{h_{32}'(\rho)}{h_{31}'(\rho)},\frac{h_{32}'(\rho)}{h_{31}'(\rho)},\frac{h_{32}'(\pi)}{h_{31}'(\pi)}\right)\vu_{2}\\
\vv_{3} & \rightarrow\diag\left(\frac{h_{12}'(\alpha)}{h_{11}'(\alpha)},\frac{h_{12}'(\alpha)}{h_{11}'(\alpha)},\frac{h_{12}'(\beta)}{h_{11}'(\beta)},\frac{h_{12}'(\beta)}{h_{11}'(\beta)}\right)\vu_{3}\rightarrow\diag\left(\frac{h_{22}'(\gamma)}{h_{21}'(\gamma)},\frac{h_{22}'(\psi)}{h_{21}'(\psi)},\frac{h_{22}'(\psi)}{h_{21}'(\psi)},\frac{h_{22}'(\psi)}{h_{21}'(\psi)}\right)\vu_{3}
\end{cases}.\label{eq:ex.pattern}
\end{equation}
\end{figure*}
The conditions would be satisfied if we choose $\vv_{1}=\vu_{1}=[0,1,1,0]^{T}$,
$\vv_{2}=\vu_{2}=[1,1,0,0]^{T}$ and $\vv_{3}=\vu_{3}=[0,0,1,1]^{T}$,
and thus the channel block achieves interference alignment with
no CSIT, i.e., BIA-feasibility.

Over the exemplary 4-symbol channel block shown above, the first two symbols of user $\mathsf{Rx}_1$ have the same channel state while the last two symbols have also the same channel state but different from the first two. To reflect the distribution of these two channel states over time, we say user $\mathsf{Rx}_1$ has the channel structure $(2,2)$ over the 4-symbol channel block. Similarly, user $\mathsf{Rx}_2$ has the channel structure $(1,3)$, and user $\mathsf{Rx}_3$ has the channel structure $(3,1)$. We say the 4-symbol channel block has the channel pattern $\{(2,2),(1,3),(3,1)\}$. It is easy to verify that a 4-symbol channel block is BIA-feasible if its channel pattern is any permutation of $(1,3)$, $(2,2)$ and $(3,1)$, such as $\{(1,3),(3,1),(2,2)\}$. The number of such permutations is six, therefore, one can find six kinds of BIA-feasible 4-symbol channel block from homogeneous $3$-user $2\times 1$ BCs.

\begin{figure*}
\includegraphics[scale=0.88]{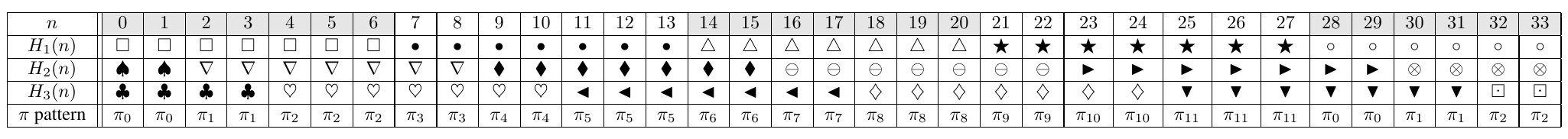}
\caption{\label{table:1}A homogeneous BC with $N=7$,
$n_{\delta,2}=2$ and $n_{\delta,3}=4$.}
\end{figure*}
We say a homogeneous $3$-user $2\times 1$ BC with coherence time $N$ and time offsets $n_{\delta ,i}$ is \emph{BIA-feasible if the broadcast channel can be fully decomposed into BIA-feasible 4-symbol channel blocks}.
We give a simple demonstration to show this argument. Consider a homogeneous BC with $N=7$, $n_{\delta,2}=2$ and $n_{\delta,3}=4$; its channel coefficients over time are presented in Fig.~\ref{table:1}, in which the same marks represent the same channel state.
The channel block from $n=6$ to $n=33$, which contains $4N=28$ consecutive time slots, can be decomposed
into $N=7$ BIA-feasible 4-symbol channel blocks as presented in Fig.~\ref{table:2}.
\begin{figure*}
\center
\includegraphics[scale=0.85]{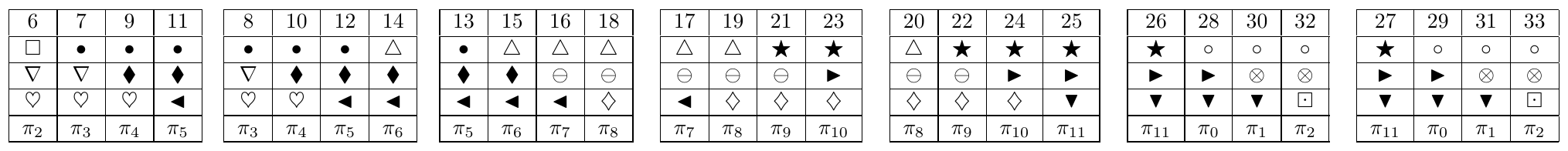}
\caption{\label{table:2}The implementation of BIA for the homogeneous
BC shown in Fig. \ref{table:1}.}
\end{figure*}
For instance, the time slots $n_1=8$, $n_2=10$, $n_3=12$ and $n_4=14$ form a BIA-feasible channel block with the channel pattern $\{(3,1),(1,3),(2,2)\}$. Since the channel repeats the same pattern every $28$ consecutive symbols afterwards, the same decomposition
can be achieved repetitively. Therefore, this channel can obtain the optimal $\tfrac{3}{2}$ DoF by using BIA. Note that the first three time slots are negligible or can be utilized for control messages.

\section{Sufficient condition for BIA-feasibility}

In the example above, we note that, from $n=n_1$ to the next time $n=n_1+1$, channel state $\{H_1(n),H_2(n),H_3(n)\}$ may or may not remain unchanged. For instance, the channel state remains the same from $n=0$ to $n=1$ in Fig.~\ref{table:1}, but varies from $n=1$ to $n=2$. We collect consecutive time slots having identical channel states into a group, and assign the time slots in the same group with the same label ($\pi$). In the example, the $0$th group contains $n=0,1$, and is given the label $\pi_0$; afterwards, the $t$th group is given the label $\pi_{t}$ until $n=4N-1=27$th time slot has been considered. For every consecutive block of $4N$ time slots, we repeat the grouping process, and start labeling groups from $0$ again.

Next we generalize the grouping and labeling idea to an arbitrary integer $N$. Without loss of generality, we assume $n_{\delta,2}\leq n_{\delta,3}$. Let $s_t$ be the number of $\pi_t$'s in the $t$th group.
Then we can easily verify that
$s_0 = n_{\delta,2}$, $s_1 = n_{\delta,3} -n_{\delta,2}$, $s_2 = N - n_{\delta,3}$ and
\begin{equation}\label{eq:s_t}
s_t = s_{t'}, \text{  if } t\equiv t'\mod{3},
\end{equation}
where  $t'\in\{0,1,2\}$. The constant term `$3$' in the equation above reflects the number of users. So, the example given by Fig.~\ref{table:1} has $s_0=2$, $s_1=2$ and $s_2=3$. By this definition, the channel block from $n=0$ to $n=N-1$ can be divided into 3 groups, which have sizes $s_0$, $s_1$ and $s_2$, respectively. The ``super" channel block from $n=0$ to $n=4N-1$, a block containing four coherence blocks, can be divided into $3\times 4 = 12$ groups, which have sizes $s_t$, $t\in\{0,1,\cdots,11\}$.

It is clear that a homogeneous $3$-user $2\times 1$ BC can be fully characterized by $s_t$. Based on $s_t$, we can visualize the BC by constructing a \emph{pattern diagram}. In specific, we align, along the $t$th column of the pattern diagram, the $\pi_t$'s in the $t$th group. As an example, Fig.~\ref{fig:ex.N4} shows the pattern diagram of the homogeneous BC given by Fig.~\ref{table:1}.
\begin{figure}
\begin{centering}
\includegraphics[scale=0.8]{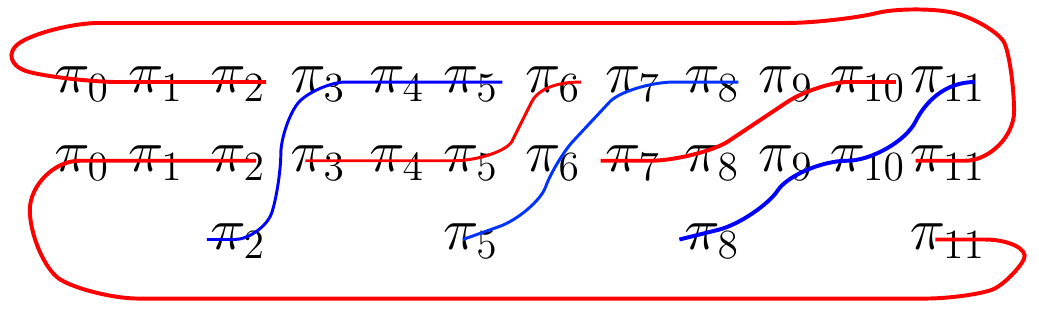}
\par\end{centering}
\caption{\label{fig:ex.N4}The pattern diagram for the homogeneous BC with $N=7$, $n_{\delta,2}=2$ and $n_{\delta,3}=4$ shown in Fig.~\ref{table:1}. As illustrated, this pattern diagram can be fully decomposed by BIA-feasible 4-tuples.}
\end{figure}

It is easy to verify, from Fig.~\ref{table:2} as an example, that any 4-symbol channel block formed by $(\pi_t,\pi_{t+1},\pi_{t+2},\pi_{t+3})$ is BIA-feasible, where the subscript of $\pi$ takes value from $\mathcal{Z}_{12}$, the integer ring on the base of $12$. Given a pattern diagram, let us connect the elements $\pi_t$ by threads, and each thread connects a 4-tuple $(\pi_t,\pi_{t+1},\pi_{t+2},\pi_{t+3})$. Then we have the following definition.
\begin{defn}
We say a pattern diagram is able to be \emph{completely decomposed} into $N$ 4-tuple $(\pi_t,\pi_{t+1},\pi_{t+2},\pi_{t+3})$ if each element $\pi_t$ is connected by one and only one thread.
\end{defn}
Fig.~\ref{fig:ex.N4} shows an example of completely decomposable channel pattern. Then, we have the following lemma.
\begin{lem}
A homogeneous $3$-user $2\times 1$ BC is BIA-feasible if its pattern diagram is completely decomposable.
\end{lem}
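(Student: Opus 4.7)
The plan is to reduce the BIA-feasibility of the full channel to that of individual 4-tuples, leveraging the fact---already checked in Section~II---that any permutation of $(1,3)$, $(2,2)$, $(3,1)$ is a BIA-feasible 4-symbol pattern. Specifically, I would establish (i) every 4-tuple $(\pi_t,\pi_{t+1},\pi_{t+2},\pi_{t+3})$ that the pattern diagram produces is itself a BIA-feasible 4-symbol block, and (ii) a complete decomposition tiles the $4N$-slot super block into $N$ disjoint such 4-tuples. Combining these, $6N$ independent symbols are delivered per $4N$ channel uses, and the $4N$-periodicity of the pattern diagram lets us repeat the scheme forever, giving the optimal $\tfrac{3}{2}$ DoF.

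The heart of the argument is step (i). By construction each group $\pi_t$ is a maximal run of consecutive slots on which all three vectors $H_i'(\cdot)$ are constant, so every inter-group boundary is induced by exactly one user crossing into a new coherence block. Under the WLOG assumption $0<n_{\delta,2}<n_{\delta,3}<N$, within one coherence period the three boundaries occur at times $n_{\delta,2}$, $n_{\delta,3}$ and $N$, triggered respectively by $\mathsf{Rx}_2$, $\mathsf{Rx}_3$ and $\mathsf{Rx}_1$. By~\eqref{eq:s_t} this ``who transitions next'' schedule is periodic with period three in $t$, so in any four consecutive groups the three inter-group boundaries correspond to three different users, each appearing exactly once and at a different position of the tuple. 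Picking one representative slot from each group therefore gives user $\mathsf{Rx}_i$ a structure $(k_i,4-k_i)$ with $\{k_1,k_2,k_3\}=\{1,2,3\}$, i.e.\ a permutation of $(1,3),(2,2),(3,1)$; BIA-feasibility then follows from the discussion preceding Definition~1, with the same type of $0/1$ signaling vectors as in the example after~\eqref{eq:ex.pattern}.

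For step (ii), complete decomposability is by definition a partition of the $\sum_{t=0}^{11} s_t = 4N$ elements of the pattern diagram into $N$ threads, each thread being a 4-tuple of the type handled in step (i). The $N$ threads are therefore pairwise disjoint in time and together cover every physical slot of the super block; concatenating the $N$ sub-schemes delivers six symbols per thread, $6N$ in total, over $4N$ channel uses. Because the pattern diagram is identical in every subsequent block of $4N$ slots, the same decomposition and the same signaling vectors $\vv_i,\vu_i$ can be reused indefinitely, so the long-term DoF is exactly $6N/(4N)=\tfrac{3}{2}$; the at most $4N-1$ initial slots preceding the first complete super block contribute only a vanishing overhead.

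The main obstacle I expect is making step (i) airtight: the cyclic-transition argument requires the generic hypothesis $0<n_{\delta,2}<n_{\delta,3}<N$ so that the three user-transitions per period are distinct. Handling degenerate offsets in which two boundaries coincide (and some $s_t$ equals zero) needs a separate check, and the argument must also verify consistency when a thread wraps around the $\mathcal{Z}_{12}$ boundary---that is, straddles two adjacent super blocks---although here the $4N$-periodicity of the channel pattern makes the verification routine.
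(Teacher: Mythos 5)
Your proposal is correct and follows essentially the same route as the paper, which states this lemma without a formal proof: the paper simply asserts (``it is easy to verify'') that any 4-tuple $(\pi_t,\pi_{t+1},\pi_{t+2},\pi_{t+3})$ forms a BIA-feasible 4-symbol block and then invokes the definition of channel BIA-feasibility as full decomposition into such blocks. Your periodic-transition argument --- that the three inter-group boundaries in any four consecutive groups are triggered by the three distinct users, yielding a permutation of $(1,3)$, $(2,2)$, $(3,1)$ --- supplies exactly the verification the paper leaves implicit, and your handling of disjointness, periodicity, and the degenerate-offset caveat is consistent with the paper's setup.
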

By using the lemma, we can find a sufficient condition for a homogeneous $3$-user $2\times 1$ to be BIA-feasible in terms of $s_i$, which is given by the following theorem.
\begin{thm}\label{thm:feasible.region.s}
A homogeneous $3$-user $2\times 1$ BC is BIA-feasible if the following inequality is satisfied:
\begin{equation}\label{eq:BIA.cond}
\sum_{i=0}^2 s_i \leq 4\min(s_0,s_1,s_2).
\end{equation}
\end{thm}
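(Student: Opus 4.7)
The plan is to invoke the preceding lemma so that BIA-feasibility reduces to exhibiting a complete decomposition of the pattern diagram whenever the stated inequality holds. Within one super-block of $4N$ symbols the pattern diagram has $12$ columns, and column $t$ contains exactly $s_{t\bmod 3}$ copies of $\pi_t$. I will let $x_t$ denote the number of 4-tuple threads of type $(\pi_t,\pi_{t+1},\pi_{t+2},\pi_{t+3})$, with indices taken in $\mathcal{Z}_{12}$. Because every element of column $u$ is covered by exactly one thread, and the eligible threads are those starting at $u-3,u-2,u-1,u$, complete decomposability is equivalent to the existence of a non-negative integer solution to the circulant system
\begin{equation*}
x_{u-3}+x_{u-2}+x_{u-1}+x_u=s_{u\bmod 3},\qquad u\in\mathcal{Z}_{12}.
\end{equation*}
Once such counts are found, the $x_t$ threads of each type can be matched to arbitrary elements in the four columns they visit, producing an explicit decomposition.

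The first step in solving this system is to exploit the cyclic structure. Subtracting the equation at index $u$ from the one at $u+1$ yields the recursion $x_{u+4}-x_u=s_{(u+1)\bmod 3}-s_{u\bmod 3}$, which expresses $x_4,\ldots,x_{11}$ in closed form in terms of the four free values $x_0,x_1,x_2,x_3$; iterating three times recovers $x_{u+12}=x_u$, which is a consistency check on the recursion. Substituting these expressions back into any of the twelve original equations collapses all of them into the single linear constraint $x_0+x_1+x_2+x_3=s_0$. Imposing non-negativity on the derived $x_4,\ldots,x_{11}$ then yields four lower bounds $x_i\geq L_i$ for $i=0,1,2,3$, where $L_0=L_3=\max(0,s_0-s_1,s_0-s_2)$ and $L_1,L_2$ are defined by the cyclic analogues.

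A non-negative integer solution therefore exists if and only if $L_0+L_1+L_2+L_3\leq s_0$, and I plan to finish by showing this is equivalent to the hypothesis via a short case analysis on which of $s_0,s_1,s_2$ attains the minimum. For instance, when $s_0=\min(s_0,s_1,s_2)$ we have $L_0=L_3=0$, $L_1=s_1-s_0$ and $L_2=s_2-s_0$, so the condition becomes $s_1+s_2-2s_0\leq s_0$, i.e.\ $s_0+s_1+s_2\leq 4s_0$. The two remaining cases are symmetric and give the same bound $s_0+s_1+s_2\leq 4\min(s_0,s_1,s_2)$. An explicit integer witness such as $x_0=L_0+s_0-\sum_{i=0}^{3}L_i$ and $x_j=L_j$ for $j=1,2,3$ then certifies the decomposition. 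The main point requiring care is the verification that after substituting the difference relations all twelve equations really do collapse into one; this will rely on the built-in periodicity $s_{u\bmod 3}=s_{(u+3)\bmod 3}$ of the group sizes. Integrality is never an obstacle because every $L_i$ and $s_0$ is an integer, so any residual slack can be absorbed by a single $x_i$.
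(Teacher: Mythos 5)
Your proposal is correct and takes essentially the same route as the paper: both invoke the complete-decomposition lemma and reduce the theorem to finding non-negative integer thread counts solving the circulant covering system $x_{u-3}+x_{u-2}+x_{u-1}+x_{u}=s_{u\bmod 3}$, and the paper's explicit assignment (its Table of $l_t$ values, with $l_0=l_3=0$, $l_1=s_1-s_0$, $l_2=2s_0-s_1$ when $s_0$ is the minimum) is exactly one point of the solution family your difference-recursion parametrization produces. The only distinction is presentational: the paper guesses and verifies a single solution, while you solve the system in general, which additionally shows the inequality is necessary for decomposability by threads of this particular form.
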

\begin{proof}
\begin{figure*}
\begin{centering}
\includegraphics[scale=0.9]{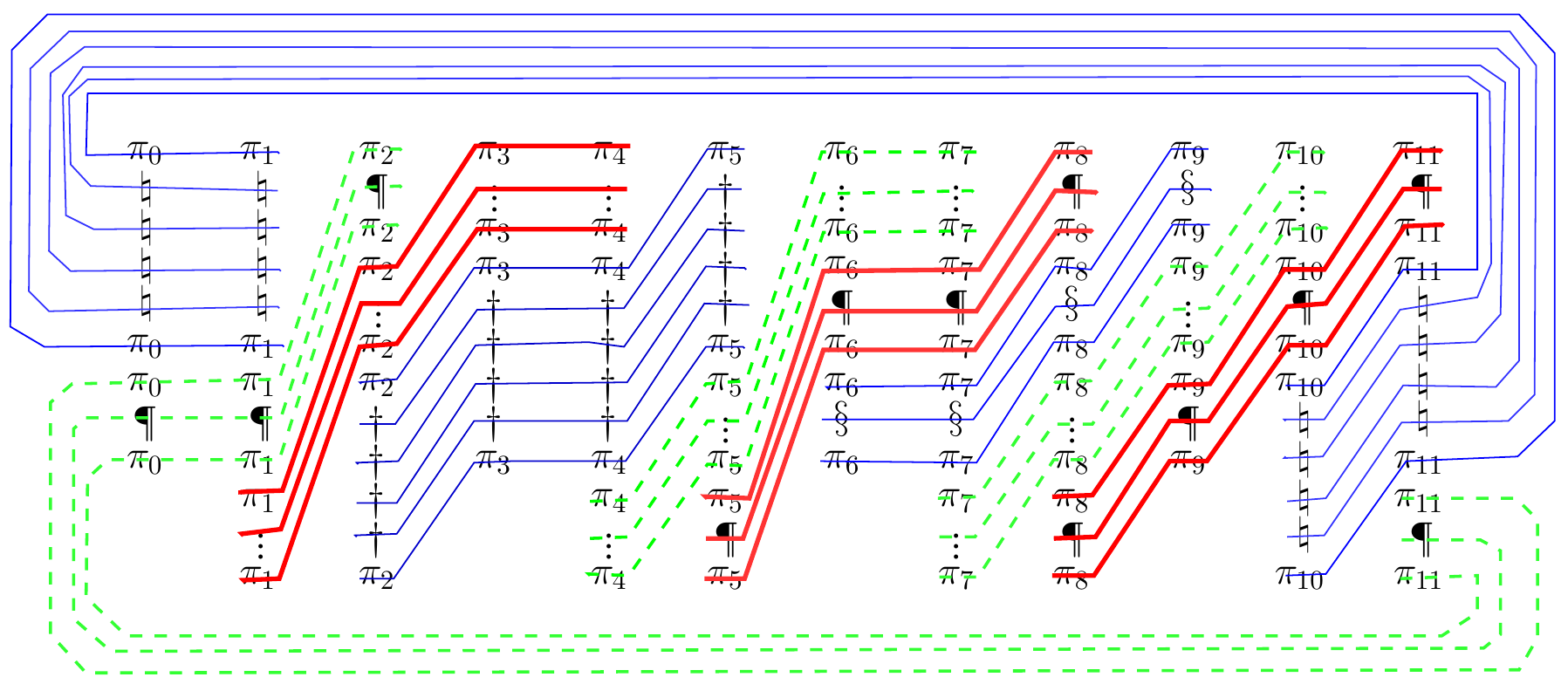}
\caption{Complete decomposition of the pattern diagram of a homogeneous $3$-user $2\times 1$ BC which meets the sufficient condition given by \eqref{eq:BIA.cond}.}
\label{figure:proof.thm01}
\end{centering}
\end{figure*}
We prove the theorem by showing a constructive algorithm which can completely decomposes a pattern diagram, if it satisfies \eqref{eq:BIA.cond}, into $N$ 4-tuple $(\pi_t,\pi_{t+1},\pi_{t+2},\pi_{t+3})$.

Without loss of generality, we assume $s_0 = \min (s_0,s_1,s_2)$. Fig.~\ref{figure:proof.thm01} shows the pattern diagram of a homogeneous $3$-user $2\times 1$ BC which meets the sufficient condition \eqref{eq:BIA.cond}. This figure also shows the threads which are designed to completely decompose the pattern diagram. Let $l_t$ be the number of threads beginning from $\pi_t$, $t\in \mathcal{Z}_{12}$. To elaborate the decomposition, we start with the threads beginning with $\pi_1$. We use $l_1 = s_1 - s_0$ such threads, and denote $\vdots$ as the $\pi_t$'s connected by this number of threads. The values of other $l_t$'s and the notations associated with them are listed in Table.~\ref{table:threads}. From $\sum_{i=0}^2 s_i \leq 4s_0$, we can prove $l_t\geq 0$ for all $t\in \mathcal{Z}_{12}$.
\begin{table*}
\begin{center}
\begin{tabular}{|c||c|c|c|c|c|c|c|c|c|c|c|c|}
\hline
 & $l_{0}$ & $l_{1}$ & $l_{2}$ & $l_{3}$ & $l_{4}$ & $l_{5}$ & $l_{6}$ & $l_{7}$ & $l_{8}$ & $l_{9}$ & $l_{10}$ & $l_{11}$\tabularnewline
\hline
Size & 0 & $s_{1}-s_{0}$ & $2s_{0}-s_{1}$ & 0 & $s_{1}-s_{0}$ & $s_{2}-s_{0}$ & $3s_{0}-s_{1}-s_{2}$ & $s_{1}-s_{0}$ & $s_{2}-s_{0}$ & 0 & $2s_{0}-s_{2}$ & $s_{2}-s_{0}$\tabularnewline
\hline
Notations &  & $\vdots$  & $\dagger$ &  & $\vdots$ & $\P$ & $\S$ & $\vdots$ & $\P$ &  & $\natural$ & $\P$ \tabularnewline
\hline
\end{tabular}
\end{center}
\caption{The size of the threads beginning from $\pi_t$ and the notation.}\label{table:threads}
\end{table*}

From Table.~\ref{table:threads}, we can easily verify that the equation
\begin{equation}
s_t = l_{t-3} + l_{t-2} + l_{t-1} + l_{t}
\end{equation}
is valid for all $t\in\mathcal{Z}_{12}$. For instance, when $t=7$, we have $s_7 = s_1$ according to \eqref{eq:s_t}, and $l_4 + l_5 + l_6 + l_7 = s_1-s_0 + s_2-s_0 + 3s_0-s_1-s_2 + s_1-s_0 = s_1$. This proves that the decomposition given by Table.~\ref{table:threads}, also shown in Fig.~\ref{figure:proof.thm01}, is a valid complete decomposition. So, the theorem is proved.
\end{proof}

\section{Impact of coherence time $N$ and user number $K$}

Physically, given that $n_{\delta,1}=0$ is fixed, it is justified to model $n_{\delta,i}$, $i\in\{2,3\}$, as independent random variables uniformly distributed over $0\leq n_{\delta,i}\leq N-1$.

\subsection{Impact of coherence time $N$}
Given $N$ and $n_{\delta,1}=0$, there are $N^2$ pairs of $(n_{\delta,2},n_{\delta,3})$. However, not all of them are able to satisfy the sufficient condition \eqref{eq:BIA.cond} to ensure a BIA-feasible BC. The following lemma counts the number of pairs which meet the sufficient condition.
\begin{lem}
Given a homogeneous $3$-user $2\times 1$ BC with coherence time $N$ and $n_{\delta,1}=0$, let $f(N,3)$ denote the number of $(n_{\delta,2},n_{\delta,3})$ pairs which can generate BIA-feasible channels. Then we have
\begin{equation}
f(N,3)
  =  2\sum_{s_0 = \lceil \tfrac{N}{4}\rceil }^{\lfloor\tfrac{N}{3} \rfloor}
\left( 6( \lfloor \tfrac{N-s_0}{2} \rfloor  -s_0) + 3 - 3\left\lfloor \lfloor\tfrac{N-s_0}{2} \rfloor \tfrac{2}{N-s_0}\right \rfloor  \right) + 2\left\lfloor \lfloor\tfrac{N}{3} \rfloor \tfrac{3}{N}\right\rfloor,
\end{equation}
where $\lfloor \cdot \rfloor$ represents the floor function, $\lceil \cdot \rceil $ represents the ceiling function.
\end{lem}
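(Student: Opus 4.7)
The plan is to reduce the feasibility condition of Theorem~\ref{thm:feasible.region.s} to a combinatorial constraint on the gap sizes, set up a two-to-one correspondence between admissible offset pairs and ordered gap triples, and then enumerate those triples by stratifying on their minimum entry.

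I would first rewrite \eqref{eq:BIA.cond}: since $s_0+s_1+s_2 = N$, the inequality collapses to $\min(s_0,s_1,s_2) \geq \lceil N/4 \rceil$. With $n_{\delta,1}=0$ fixed and $(n_{\delta,2},n_{\delta,3})\in\{0,\dots,N-1\}^2$, the three user positions partition the cyclic timeline $\mathbb{Z}_N$ into three arcs of positive lengths $(a,b,c)$ summing to $N$; positivity is forced by the lower bound on the minimum. Reading these arcs clockwise from user~1, every ordered triple $(a,b,c)$ corresponds to exactly two ordered pairs $(n_{\delta,2},n_{\delta,3})$, depending on whether user~2 or user~3 sits at the first arc-end. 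Hence $f(N,3) = 2|A|$, where $A$ is the set of ordered non-negative integer triples with sum $N$ and minimum at least $\lceil N/4 \rceil$; this accounts for the prefactor $2$ in the stated formula.

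To evaluate $|A|$ I would stratify by the minimum value $s_0$, which ranges over $\lceil N/4 \rceil \leq s_0 \leq \lfloor N/3 \rfloor$. For each $s_0 < N/3$ I partition the multisets $\{s_1,s_2\}$ satisfying $s_1\leq s_2$, $s_1\geq s_0$ and $s_1+s_2 = N - s_0$ by equality pattern: all three distinct ($s_0<s_1<s_2$, contributing $6$ ordered triples per multiset), two smallest tied ($s_0=s_1<s_2$, contributing $3$), and two largest tied ($s_0<s_1=s_2$, also contributing $3$). An elementary count yields $\lceil(N-s_0)/2\rceil - s_0 - 1$ multisets of the first type, exactly one of the second (always present when $s_0<N/3$), and one of the third precisely when $N-s_0$ is even. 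Summing the three contributions and invoking the identity $\lfloor \lfloor M/2\rfloor \cdot 2/M \rfloor = \mathbf{1}[M \text{ even}]$ with $M=N-s_0$ telescopes the total into the bracketed summand $6(\lfloor(N-s_0)/2\rfloor - s_0) + 3 - 3\lfloor \lfloor(N-s_0)/2\rfloor \cdot 2/(N-s_0)\rfloor$ appearing in the lemma.

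Finally, I would isolate the degenerate triple $s_0=s_1=s_2=N/3$, which exists only when $3\mid N$; at $s_0=N/3$ the bracketed summand already evaluates to zero, so this lone triple must be appended separately by the tail $2\lfloor \lfloor N/3\rfloor \cdot 3/N\rfloor$, which equals $2$ when $3\mid N$ and vanishes otherwise. The main obstacle I anticipate is the bookkeeping of the third step: one must check that the three equality patterns together with their boundary conditions algebraically reorganize into the single closed-form summand of the lemma. The key observation that makes this collapse work is the parity identity above, which rewrites the indicator $\mathbf{1}[M \text{ even}]$ as the compact nested floor $\lfloor \lfloor M/2\rfloor \cdot 2/M\rfloor$; once this identification is in hand, the remaining simplification is routine.
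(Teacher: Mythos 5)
Your proposal is correct and follows essentially the same route as the paper: reduce \eqref{eq:BIA.cond} to $\min(s_0,s_1,s_2)\geq\lceil N/4\rceil$, enumerate sorted triples $(s_0,s_1,s_2)$ with $s_0$ ranging over $\lceil N/4\rceil\leq s_0\leq\lfloor N/3\rfloor$, weight each multiset by its number of orderings ($6$, $3$, or $1$), and double to account for the two assignments of the arcs to $(n_{\delta,2},n_{\delta,3})$. The only differences are cosmetic bookkeeping (you split by equality pattern with a ceiling and then collapse to the floor form via the parity identity, where the paper counts $\lfloor\tfrac{N-s_0}{2}\rfloor-s_0+1$ choices of $s_1$ directly and corrects for ties).
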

\begin{IEEEproof}
Let $s_0\leq s_1\leq s_2$. Then according to $\sum_{i=0}^2 s_i = N$ and the sufficient condition $\sum_{i=0}^2 s_i \leq 4s_0$, we have the constraint $3s_0\leq N\leq 4s_0$. As $s_0$ must be an integer, we rewrite the constaint on $s_0$ as $\lceil \tfrac{N}{4} \rceil \leq s_0 \leq \lfloor \tfrac{N}{3}\rfloor$. Given any $s_0$ satisfying the constraint, $s_1$ can take any value from $s_0\leq s_1\leq \lfloor \tfrac{N-s_0}{2}\rfloor$, resulting in $\lfloor \tfrac{N-s_0}{2}\rfloor - s_0 +1$ choices. Once $s_0$ and $s_1$ are chosen, then $s_2$ is determined by $s_2 = N-s_0-s_1$.

Suppose $n_{\delta,2}<n_{\delta,3}$. For any triplet $(s_0,s_1,s_2)$, we can assign them to $n_{\delta,2}$ and $n_{\delta,3}$ using
$n_{\delta,2} = s_{i1}$, $n_{\delta,3} - n_{\delta,2} = s_{i2}$ and $N-n_{\delta,3} = s_{i3}$, where $(i1,i2,i3)$ is  any permutation of $(0,1,2)$. Each triplet gives 6 possible pairs of $(n_{\delta,2},n_{\delta,3})$ if $s_0$, $s_1$ and $s_2$ are all different; 3 pairs if two of $s_i$'s are equal; 1 pair if all of them are equal. So, given $n_{\delta,2}<n_{\delta,3}$, the number of pairs of $(n_{\delta,2},n_{\delta,3})$ which meet the sufficient condition is given by
\begin{align}
g(N,3) &=\sum_{s_0 = \lceil \tfrac{N}{4}\rceil }^{\lfloor\tfrac{N}{3} \rfloor}
\left( 6( \lfloor \tfrac{N-s_0}{2} \rfloor  -s_0) + 3 - 3\left\lfloor \lfloor\tfrac{N-s_0}{2} \rfloor \tfrac{2}{N-s_0}\right \rfloor  \right) + \left\lfloor \lfloor\tfrac{N}{3} \rfloor \tfrac{3}{N}\right\rfloor,
\end{align}
where the constant term `3' in the summation counts the number of pairs when $s_1=s_0$; the term `$3\left\lfloor \lfloor\tfrac{N-s_0}{2} \rfloor \tfrac{2}{N-s_0}\right \rfloor$' accounts for the cases when $\tfrac{N-s_0}{2} $ is an integer and thus  $s_1=s_2$; and the term `$\left\lfloor \lfloor\tfrac{N}{3} \rfloor \tfrac{3}{N}\right\rfloor$' accounts for the case when $s_0 = \tfrac{N}{3}$ and $s_2=s_1=s_0$.

By realizing the same number of pairs are available if $n_{\delta,2}>n_{\delta,3}$, we get
\begin{equation}
f(N,3) = 2g(N,3),
\end{equation}
which proves the theorem.
\end{IEEEproof}

The lemma above clearly shows that the number of BIA-feasible BCs is determined by the coherence time $N$. Based on the assumption that $n_{\delta,i}$s are independently and uniformly distributed, each pair of $(n_{\delta,2},n_{\delta,3})$ would occur with the probability of $\tfrac{1}{N^2}$. By using the lemma above, we can derive the probability that a homogeneous $3$-user $2\times 1$ BC is BIA-feasible as follows.
\begin{thm}
Given a homogeneous $3$-user $2\times 1$ BC with coherence time $N$, the probability that it is BIA-feasible is given by
\begin{equation}
P(N,3) = \frac{f(N,3)}{N^2}.
\end{equation}
\end{thm}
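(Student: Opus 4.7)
The plan is to reduce the statement to a direct application of the preceding lemma via elementary probability. Under the stated model, $n_{\delta,2}$ and $n_{\delta,3}$ are independent random variables, each uniform on $\{0,1,\ldots,N-1\}$, so their joint distribution is uniform on $\{0,\ldots,N-1\}^2$ and assigns probability $1/N^2$ to every pair $(n_{\delta,2},n_{\delta,3})$. The statement then reduces to a counting-over-total argument.

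First I would identify the event ``the BC is BIA-feasible'' (in the sense of the sufficient condition of Theorem~\ref{thm:feasible.region.s}) with a deterministic subset $\mathcal{F}\subseteq\{0,\ldots,N-1\}^2$, namely the set of offset pairs whose induced triplet $(s_0,s_1,s_2)$ satisfies $\sum_i s_i \leq 4\min_i s_i$. The preceding lemma has already shown $|\mathcal{F}|=f(N,3)$ by enumerating admissible sorted triplets $(s_0,s_1,s_2)$ and translating each back into offset pairs through all six permutations of the assignment $n_{\delta,2}=s_{i_1}$, $n_{\delta,3}-n_{\delta,2}=s_{i_2}$, $N-n_{\delta,3}=s_{i_3}$, with correction terms whenever two or three of the $s_i$ coincide. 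Summing the probability mass $1/N^2$ over $\mathcal{F}$ then yields
\[
P(N,3) \;=\; \Pr\!\bigl[(n_{\delta,2},n_{\delta,3})\in\mathcal{F}\bigr] \;=\; \frac{|\mathcal{F}|}{N^2} \;=\; \frac{f(N,3)}{N^2}.
\]

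There is no real obstacle here, since the combinatorial work has already been carried out in the lemma; the only thing to verify is that the enumeration provides a genuine bijection between $\mathcal{F}$ and the counted objects. That is handled by the lemma's case analysis: the correction terms $3\lfloor \lfloor \tfrac{N-s_0}{2}\rfloor \tfrac{2}{N-s_0}\rfloor$ and $\lfloor \lfloor \tfrac{N}{3}\rfloor \tfrac{3}{N}\rfloor$ eliminate the overcount in the coincidence cases $s_1 = s_2$ and $s_0 = s_1 = s_2$, while the overall factor of two in front of $g(N,3)$ accounts for the two orderings of $n_{\delta,2}$ and $n_{\delta,3}$. The degenerate case $n_{\delta,2}=n_{\delta,3}$ forces one $s_i$ to be zero, which violates $s_0\geq\lceil N/4\rceil$ and therefore contributes nothing to $\mathcal{F}$. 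Combining these observations completes the proof.
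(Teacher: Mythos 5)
Your proposal is correct and matches the paper's (implicit) justification exactly: the paper likewise observes that each pair $(n_{\delta,2},n_{\delta,3})$ occurs with probability $1/N^2$ under the independence and uniformity assumption, and then divides the count $f(N,3)$ from the preceding lemma by $N^2$. Your additional remarks on the bijection and the degenerate case $n_{\delta,2}=n_{\delta,3}$ are sound but not needed beyond what the lemma already establishes.
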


\subsection{Impact of user number $K$}
Now, consider a $2\times 1$ broadcast network with $K\geq4$ mobile users experiencing homogeneous block fading. Suppose
the fading block offset for $\mathsf{Rx}_{i}$ ($2\leq i\leq K$)
is independently and uniformly distributed over $0\leq n_{\delta,i}\leq N-1$, with respect to $n_{\delta,1}=0$. Now, we examine the probability
that the $2$-antenna transmitter can find, among the $K$ users, three users to form
a BIA-feasible $3$-user $2\times 1$ BC.

We first give a lemma showing the BIA-feasible condtion on $(n_{\delta,1},n_{\delta,2},n_{\delta,3})$.
\begin{lem}
\label{lem:3.user.no.benchmark}Three users form a BIA-feasible BC if
$|n_{\delta,i}-n_{\delta,j}|\geq\left\lceil \frac{N}{4}\right\rceil $
holds for any pair of $i\neq j$.
\end{lem}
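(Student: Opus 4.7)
The plan is to invoke Theorem~\ref{thm:feasible.region.s} after translating the pairwise-distance hypothesis into a lower bound on each of the three group sizes $s_0,s_1,s_2$. First, I would use translation invariance of the $N$-periodic fading pattern and the symmetry of the hypothesis in the user indices to relabel the three chosen users so that $n_{\delta,1}=0$ and $0\le n_{\delta,2}\le n_{\delta,3}<N$. With this labeling the group sizes defined through \eqref{eq:s_t} become $s_0=n_{\delta,2}$, $s_1=n_{\delta,3}-n_{\delta,2}$, and $s_2=N-n_{\delta,3}$, with $s_0+s_1+s_2=N$.

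Next, I would read $|n_{\delta,i}-n_{\delta,j}|$ as the distance on the cycle $\mathbb{Z}/N\mathbb{Z}$, which is the only reading consistent with the $N$-periodicity of the channel pattern and with the shift invariance just invoked. Under the sorted labeling above, the three pairwise cyclic distances are $\min(s_0,\,s_1+s_2)$, $\min(s_1,\,s_0+s_2)$, and $\min(s_2,\,s_0+s_1)$. By hypothesis each of these three minima is at least $\lceil N/4\rceil$, and since each $s_k$ appears explicitly inside one of them, this immediately forces $s_0,s_1,s_2\ge\lceil N/4\rceil$. Summing, $N=s_0+s_1+s_2\le 4\lceil N/4\rceil \le 4\min(s_0,s_1,s_2)$, which is exactly the sufficient condition \eqref{eq:BIA.cond}, and Theorem~\ref{thm:feasible.region.s} then delivers BIA-feasibility.

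The main---and essentially the only---delicate point I anticipate is notational rather than technical: the statement uses the ordinary absolute value, whereas the argument needs the cyclic distance. I would flag this at the start of the proof and note that a purely linear reading is insufficient (the wrap-around gap $s_2$ would otherwise be unconstrained by the hypothesis), whereas the cyclic reading is both natural given the periodicity of the channel and exactly what the reduction to Theorem~\ref{thm:feasible.region.s} requires. Once that is clarified, the remainder is short bookkeeping.
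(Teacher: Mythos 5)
Your proof is correct and follows the same route as the paper's: normalize one offset to zero, express $s_0,s_1,s_2$ in terms of the offsets, observe that the pairwise separation hypothesis forces each $s_k\ge\lceil N/4\rceil$, and invoke Theorem~\ref{thm:feasible.region.s}. The paper states this only as a one-line appeal to the theorem together with a ring figure; your explicit remark that $|n_{\delta,i}-n_{\delta,j}|$ must be read as the cyclic distance on $\mathbb{Z}/N\mathbb{Z}$ (which is exactly what that figure conveys) is a worthwhile clarification, since under a literal absolute-value reading the wrap-around gap $s_2$ would be unconstrained and the reduction would fail.
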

\begin{figure}
\begin{centering}
\includegraphics[scale=1.5]{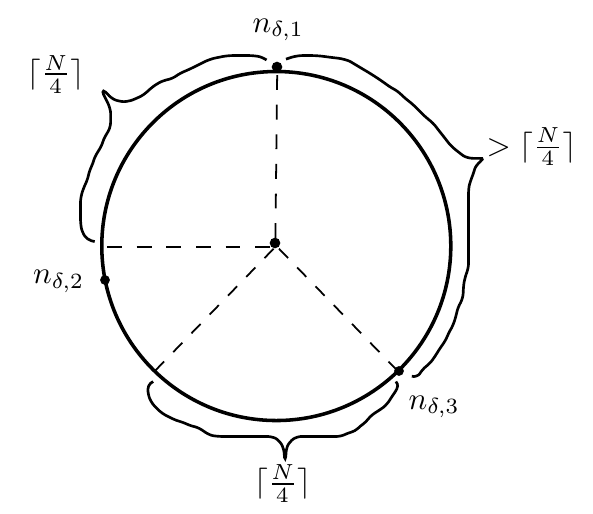}
\par\end{centering}

\caption{\label{fig:3.user.ring}A BIA-feasible 3-tuple $(n_{\delta,1},n_{\delta,2},n_{\delta,3})$.}
\end{figure}

\begin{IEEEproof}
Illustratively, we can visualize the condition by Fig. \ref{fig:3.user.ring},
in which any pair of $n_{\delta,i}$ and $n_{\delta,j}$ is separated
by at least $\lceil\frac{N}{4}\rceil$. The lemma can be easily proved
by setting one user as the benchmark, say $n_{\delta,1}=0$, and then
applying Theorem \ref{thm:feasible.region.s}.
\end{IEEEproof}
Next we show another lemma which is about to be used in the following analysis.
\begin{lem}
Suppose there are $n$ labeled boxes, and $\Theta$ labeled balls.
Given $\mu \leq \min\{n,\Theta\}$ boxes, the number of ways to put the balls into the boxes
such that the $\mu$ boxes are not empty is given by
\begin{equation}
\gamma(n,\Theta,\mu)=\sum_{k=\mu}^{\Theta}\binom{\Theta}{k}\mu!S(k,\mu)(n-\mu)^{\Theta-k},
\end{equation}
where $S(k,\mu)=\frac{1}{\mu!}\sum_{j=0}^{\mu}(-1)^{\mu-j}\binom{\mu}{j}j^{k}$
is the Stirling number of the second kind \cite{Stanley2011}. \end{lem}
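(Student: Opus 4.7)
The plan is to prove the lemma by a direct double-counting argument, partitioning the set of admissible placements according to how many balls fall inside the $\mu$ distinguished boxes. Let $B$ be the set of $\mu$ prescribed boxes and $\bar B$ the remaining $n-\mu$ boxes. Any valid placement is uniquely determined by (i) which subset of balls is sent into $B$, (ii) how those balls are distributed among the $\mu$ boxes of $B$, and (iii) how the rest of the balls are distributed among the $n-\mu$ boxes of $\bar B$. The non-emptiness requirement applies only to the boxes in $B$, so in stage (iii) there is no constraint.

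Next I would introduce the index $k$ = number of balls that land in $B$. Since $B$ must be fully covered, we need $k \geq \mu$, and of course $k \leq \Theta$. For a fixed $k$:
\begin{itemize}
\item The choice of which $k$ labeled balls go to $B$ contributes a factor $\binom{\Theta}{k}$.
\item Distributing these $k$ labeled balls into the $\mu$ labeled boxes of $B$ such that every box is occupied is precisely counting surjections from a $k$-set onto a $\mu$-set; the standard identity gives $\mu!\,S(k,\mu)$ such surjections, where $S(k,\mu)$ is the Stirling number of the second kind (this is the inclusion–exclusion identity that defines $S(k,\mu)$).
\item The remaining $\Theta - k$ labeled balls can be placed arbitrarily into the $n-\mu$ boxes of $\bar B$, contributing $(n-\mu)^{\Theta-k}$.
\end{itemize}
Multiplying these three factors and summing over $k$ from $\mu$ to $\Theta$ yields exactly the claimed expression for $\gamma(n,\Theta,\mu)$.

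The only subtle point is the surjection count in stage (ii), which relies on the identity that the number of surjections $[k]\twoheadrightarrow[\mu]$ equals $\mu!\,S(k,\mu)$; this follows from inclusion–exclusion over the subsets of $B$ that are forced to be empty, giving $\sum_{j=0}^{\mu}(-1)^{\mu-j}\binom{\mu}{j}j^{k}$, which matches the given definition of $\mu!\,S(k,\mu)$. I would cite this identity (as the paper already does via \cite{Stanley2011}) rather than re-derive it. Degenerate cases ($\mu=0$ or $\Theta=\mu$) should be checked briefly: when $\mu=0$ the sum reduces to $n^{\Theta}$ (unrestricted placements), and when $\Theta=\mu$ the only surviving term is $k=\mu$, giving $\mu!$ placements, which is the correct count of bijections from $\Theta$ balls onto the $\mu$ prescribed boxes. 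No step is really hard; the main obstacle is simply making sure the three-stage decomposition is stated cleanly so that the reader sees the factors are independent and the sum-over-$k$ partition is exhaustive and disjoint.
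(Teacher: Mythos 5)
Your proposal is correct and follows essentially the same two-stage argument as the paper: choose the $k$ balls destined for the $\mu$ prescribed boxes, count the surjections onto those boxes via $\mu!\,S(k,\mu)$, place the remaining $\Theta-k$ balls freely in the other $n-\mu$ boxes, and sum over $k$. Your version is in fact slightly more careful than the paper's, since you make explicit that $k$ is the exact number of balls landing in the prescribed boxes (which is what makes the sum over $k$ a disjoint partition), but the underlying approach is identical.
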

\begin{IEEEproof}
We divide the ball assignment process into two steps. Firstly we randomly
choose $k\geq\mu$ balls, which has $\binom{\Theta}{k}$ ways, and
put the chosen balls into the $\mu$ boxes such that each box has
at least one balls, which has $\mu!S(k,\mu)$ ways. Secondly we randomly
put the rest $\Theta-k$ balls into the rest $n-\mu$ boxes. To combine
these two steps and sum over $\mu\leq k\leq\Theta$ proves the lemma.
\end{IEEEproof}
When there are $K$ users, by using the two lemmas above, we can count
the number of $\{n_{\delta,i}\}_{i=2}^K$ events in which no three users' offsets can form the feasible
ring as shown in Fig. \ref{fig:3.user.ring}. To ease the derivation,
we assume $N$ is a multiple of $4$, that is, $\frac{N}{4}\in\mathcal{N}=\{1,2,\cdots\}$.

\begin{figure}
\begin{raggedright}
\subfloat[\label{fig:Type-1}Type 1.]{\begin{centering}
\includegraphics[scale=1.5]{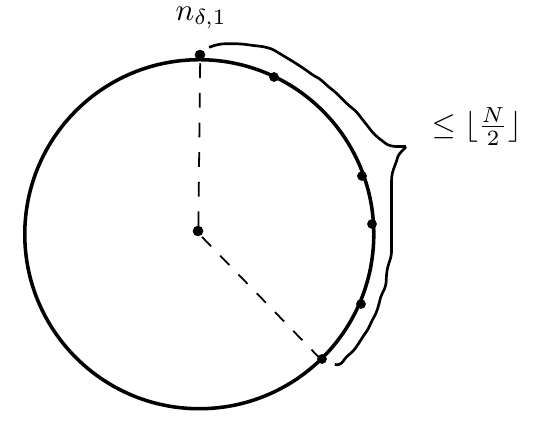}
\par\end{centering}

}\subfloat[\label{fig:Type-2}Type 2]{\centering{}
\includegraphics[scale=1.5]{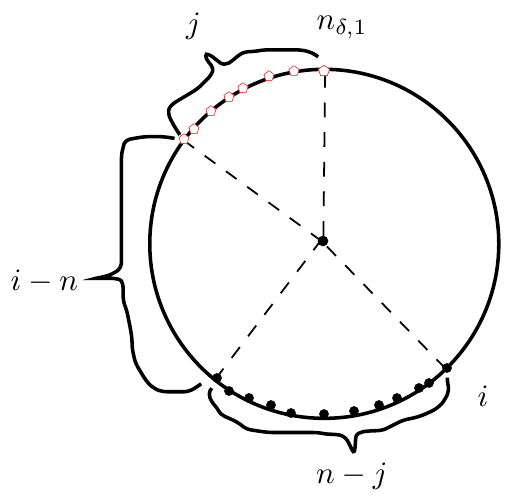}}
\par\end{raggedright}

\caption{\label{fig:thm.ball.box}Events with no BIA-feasible 3-tuple.}
\end{figure}

\begin{thm}
Given $N$ the coherence time subject to $\tfrac{N}{4}$ being an integer, let $f(N, K, 3)$
be the number of $\{n_{\delta,i}\}_{i=2}^K$ events in which no three users, from the $K$ users, can help the transmitter to form a BIA-feasible BC. Then
\begin{eqnarray}
f(N,K,3) & = & 1+\sum_{n=2}^{N/2}\left(2[n^{K-1}-(n-1)^{k-1}]+(n-2)[n^{K-1}-2(n-1)^{K-1}+(n-2)^{K-1}]\right)\nonumber \\
 &  & + (2^{K-1}-1)+\sum_{n=3}^{N/4+1}(n-1)\left[2(n-3)\gamma(n,K-1,3)+3\gamma(n,K-1,2)\right]\nonumber \\
 &  & +\sum_{n=3}^{N/4+1}(n-1)(n-3)\left[\frac{1}{2}(n-4)\gamma(n,K-1,4)+\gamma(n,K-1,3)\right] + \nonumber \\
 &  & \sum_{n=N/4+2}^{N/2}(\frac{N}{2}-n+1)(n-1)\left(2\gamma(n,K-1,3)+\frac{1}{2}(n-4)\gamma(n,K-1,4)\right)\label{eq:no.BIA.events.num}
\end{eqnarray}
where

\begin{subequations}

\begin{eqnarray}
\gamma(n,K-1,2) & = & n^{K-1}-2(n-1)^{K-1}+(n-2)^{K-1},\\
\gamma(n,K-1,3) & = & n^{K-1}-3(n-1)^{K-1}+3(n-2)^{K-1}-(n-3)^{K-1},\\
\gamma(n,K-1,3) & = & n^{K-1}-4(n-1)^{K-1}+6(n-2)^{K-1}-4(n-3)^{K-1}+(n-4)^{K-1}.
\end{eqnarray}

\end{subequations}\end{thm}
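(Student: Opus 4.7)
The plan is to translate the counting task, via Lemma~\ref{lem:3.user.no.benchmark}, into a purely geometric enumeration on the discrete ring $\mathbb{Z}_N$: count the configurations of $K$ offsets (with $n_{\delta,1}=0$ fixed) in which no triple is pairwise separated by at least $\lceil N/4\rceil=\tfrac{N}{4}$ on the ring, and then split these \emph{bad} configurations into the two visual families sketched in Fig.~\ref{fig:thm.ball.box}. The ball-in-box lemma is the combinatorial engine that glues the pieces together.

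\textbf{Step 1: Reformulation.} I would first observe that three offsets are BIA-feasible iff the three arcs they cut off on the $N$-ring each measure at least $\tfrac{N}{4}$. Consequently, a $K$-tuple is bad iff every triple contains at least one pair at cyclic distance strictly less than $\tfrac{N}{4}$. This is the geometric statement to which every subsequent count refers.

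\textbf{Step 2: Geometric dichotomy (Type~1).} Next I would prove that a bad configuration either (Type~1) lies inside a common arc of length at most $\tfrac{N}{2}$, or (Type~2) lies in the disjoint union of short arcs whose endpoints obstruct the formation of any well-spaced triple. The isolated term $1$ in \eqref{eq:no.BIA.events.num} is the trivial case where $n_{\delta,i}=0$ for all $i$. The first summation $\sum_{n=2}^{N/2}$ enumerates non-trivial Type~1 configurations stratified by the length $n$ of the minimal containing arc anchored at the benchmark $0$; the kernel $n^{K-1}-(n-1)^{K-1}$ counts placements of the remaining $K-1$ offsets in this arc whose far endpoint is actually attained, and $n^{K-1}-2(n-1)^{K-1}+(n-2)^{K-1}$ counts placements attaining both endpoints, with the leading factor $2$ and the prefactor $(n-2)$ accounting for the two rotational directions around $0$ and the number of admissible arc positions.

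\textbf{Step 3: Ball-in-box for Type~2.} The term $(2^{K-1}-1)$ handles the extreme Type~2 configuration where the two clusters reduce to the antipodal singletons $\{0\}$ and $\{N/2\}$. The remaining Type~2 contributions, which form the double $\gamma$-summations, I would treat by parameterising a two-cluster configuration by the arc gap $n$ between the clusters and invoking the preceding lemma: the $K-1$ free offsets act as labelled balls, the allowed in-cluster positions act as labelled boxes, and the ``non-empty'' requirement forces $\mu\in\{2,3,4\}$ distinguished boxes to be hit. The closed-form expansions of $\gamma(n,K-1,\mu)$ follow directly from substituting the Stirling identity $\mu!\,S(k,\mu)=\sum_{j=0}^{\mu}(-1)^{\mu-j}\binom{\mu}{j}j^k$ into the ball-in-box formula. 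The range split at $n=N/4+1$ distinguishes the regime where an offset outside both clusters cannot be far enough from either to complete a feasible triple (small gap, contributing the $\mu=2,3$ terms) from the regime where such offsets additionally have to be excluded from a forbidden zone (large gap, contributing a mix of $\mu=3,4$ terms). The main obstacle will lie precisely in this step: making the cluster parameterisation bijective with the actual bad configurations, correctly identifying which boxes must be non-empty in each sub-regime, and handling the transition at $n=\tfrac{N}{4}+1$ without double-counting the boundary. Once this bookkeeping is verified, summing the Type~1 count, the antipodal singleton count, and the Type~2 $\gamma$-counts reproduces \eqref{eq:no.BIA.events.num} term by term.
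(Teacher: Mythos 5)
Your overall architecture matches the paper's: reduce the count to a ring/ball-box picture via Lemma~\ref{lem:3.user.no.benchmark}, split the bad configurations into (Type 1) all users inside one arc of length at most $N/2$ and (Type 2) users spread over two separated arcs, and count with $\gamma(n,\Theta,\mu)$. Your Step 2 is essentially the paper's Type-I argument, including the correct reading of the leading $1$, the factor $2$, and the prefactor $(n-2)$, and your reading of the $(2^{K-1}-1)$ term as the two-antipodal-singleton case is also the paper's.

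The gap is in Step 3, which is where almost all of \eqref{eq:no.BIA.events.num} lives. First, in the paper $n$ is the \emph{total} number of boxes occupied by the two arcs, not the ``arc gap'' between the clusters; with your parameterisation the ranges $3\le n\le N/4+1$, $N/4+2\le n\le N/2$ and the coefficients $(n-1)$, $2(n-3)$, $\tfrac{1}{2}(n-4)(n-3)$ do not come out. Second, you never account for the factor $(n-1)$, which counts the admissible positions $i$ of the far end of the second arc, $N/2+1\le i\le n+N/2-1$. Third, and most importantly, the two separate double-$\gamma$ sums over $3\le n\le N/4+1$ arise from distinguishing whether the benchmark $n_{\delta,1}$ is an \emph{endpoint} or an \emph{interior point} of its arc, and, within each case, from summing over the length $j$ of that arc: the subcases $j=1$, $2\le j\le n-2$, $j=n-1$ produce the coefficients $3\gamma(\cdot,2)$, $2(n-3)\gamma(\cdot,3)$, $(n-3)\gamma(\cdot,3)$, and $\sum_{j=3}^{n-2}(j-2)=\tfrac{1}{2}(n-3)(n-4)$ multiplying $\gamma(\cdot,4)$. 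None of this bookkeeping appears in your sketch; you defer it as ``the main obstacle,'' but it is precisely the content of the theorem. Finally, your explanation of the threshold at $n=N/4+1$ is incorrect: it does not come from offsets outside the clusters (by construction there are none), but from the requirement that each arc individually have length at most $N/4$ --- otherwise its two endpoints, together with any user in the opposite arc, would form a feasible triple --- which for $n\ge N/4+2$ restricts $j$ to $n-N/4\le j\le N/4$ and eliminates the boundary cases $j=1$ and $j=n-1$.
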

\begin{IEEEproof}
We cast this problem into a ball-box problem \cite{Stanley2011}, in which $N$
boxes are labeled from $1$ to $N$ counter-clockwise forming a ring as shown in Fig.~\ref{fig:thm.ball.box}, and a user with offset $n_{\delta,k}$
is put into the box with label $n_{\delta,k}+1$. As before, we set $n_{\delta,1}=0$ as the benchmark. We refer to the length of an arc as the number of
the boxes in the arc. Then, we divide all events
in which no three users can meet the BIA condition into two types. The first is that the
length of the arc which contains all users is no greater
than $\frac{N}{2}$, and the second larger than $\frac{N}{2}$.

\textbf{Type I}), We start with counting the number of the first type events.
Given an arc with the
length $n\leq\frac{N}{2}$ as shown in Fig. \ref{fig:Type-1}, we
count the event number by applying the similar argument developed
in \cite{Zhou2012a}.

If $n_{\delta,1}$ is one end point of the arc,
then the rest $K-1$ users can be randomly loaded into the $n$ boxes
on the arc subject to the condition that the other end of the arc
must be occupied by at least one user. The number of
such events is $2[n^{K-1}-(n-1)^{K-1}]$, where the coefficient $2$ reflects $n_{\delta,1}$
can be either of the two end points. If $n_{\delta,1}$ is not any
end point of the arc, then the location of the arc relative to $n_{\delta,1}$
has $n-2$ possibilities, and for each possibility the rest $K-1$
users can be randomly loaded into the $n$ boxes subject to that the two end points of the arc must be
occupied. The number of such events is
$(n-2)[n^{K-1}-2(n-1)^{K-1}+(n-2)^{K-1}]$, i.e., $(n-2)\gamma(n,K-1,2)$. Combining them, and summing
over $1\leq n\leq\frac{N}{2}$, we get the number of type-I events:
\begin{equation}
f_{1}(N,K,3)=1+\sum_{n=2}^{N/2}\left(2[n^{K-1}-(n-1)^{K-1}]+
(n-2)\gamma(n,K-1,2)\right).
\end{equation}

\textbf{Type II}), Now we count the type II events. As shown in Fig.~\ref{fig:Type-2}, we assume all users are located in two arcs, which occupy $n$
boxes in total; one arc is filled by users denoted by pentagons,
the other by users denoted by dots.

(II,a): In this part, we count the events in which $n_{\delta,1}$ is one end point of the pentagon arc. In such events, as shown in Fig.~\ref{fig:Type-2}, the
position of the end point of the dot arc $i$ should satisfy the condition
$\frac{N}{2}+1\leq i\leq n+\frac{N}{2}-1$, otherwise the event would
belong to the type I above. Given such a $i$, it requires
that $j\leq\frac{N}{4}$ and $n-j\leq\frac{N}{4}$, otherwise a BIA-feasible
3-tuple will arise according to Lemma~\ref{lem:3.user.no.benchmark}. In the following we count the number of such events as $n$ takes three different kinds of values.

Given $n=2$, then $i=\tfrac{N}{2}+1$, the rest $K-1$ balls are randomly put
into two boxes subject to that the other box must be occupied, resulting in $2^{K-1}-1$ such events.

Given $3\leq n\leq\frac{N}{4}+1$ and suitable $i$, the pentagon arc
length $j$ can take three kinds of values. Firstly, when $2\leq j\leq n-2$, the user $n_{\delta,1}$
can be either of the end points of the arc. So the number of possibilities
of the arc combination with $n_{\delta,1}$ being one end point is
$2(n-3)$, and the rest $K-1$ users can be randomly located at the
two arcs with totally $n$ boxes subject to that the other three
end points of the two arcs must be occupied, resulting
in $2(n-3)\gamma(n,K-1,3)$ events. Secondly, when $j=1$,
the two ends of the pentagon arc is the same. Consequently there are $K-1$ users
to be located subject to that the two end points of the dot arc
must be occupied, resulting in $\gamma(n,K-1,2)$ events. Thirdly, when $j=n-1$,
the dot arc has only one end point to fill, resulting in
$2\gamma(n,K-1,2)$ events, in which $2$ reflects $n_{\delta,1}$ can be
either of the two end points of the pentagon arc. Combining the cases of $j=1$,
$2\leq j\leq n-2$, and $j=n-1$, we conclude that
the number of events for the given $n$ and $i$ is $2(n-3)\gamma(n,K-1,3)+3\gamma(n,K-1,2)$.

Given $\frac{N}{4}+2\leq n\leq\frac{N}{2}$ and $i$, the pentagon
arc length can not be $j=1$ or $j=n-1$, but $n-\tfrac{N}{4}\le j\leq\tfrac{N}{4}$.
Referring to the argument above for $j\neq1$ and $j\neq n-1$, we
get $2(\frac{N}{2}-n+1)\gamma(n,K-1,3)$ such events, where
$\frac{N}{2}-n+1$ stands for the possible choices of $j$ based on $n-\frac{N}{4}\le j\leq\frac{N}{4}$.

(II.b): Now we consider the events in which $n_{\delta,1}$ is not one end point of the pentagon arc,
but an internal point of the arc. It is clear that such events can only happen when $j\geq3$, and thus $n\geq 4$.
We first consider the case of $4\leq n\leq\frac{N}{4}+1$. Given such a $n$ and suitable $i$, if $3\leq j\leq n-2$, the
user $n_{\delta,1}$ can take $j-2$ internal positions, resulting
in $\sum_{j=3}^{n-2}(j-2)$ possibilities. The rest $K-1$ users can
be randomly located subject to that each of the end points of the
two arcs must be occupied by at least one user, resulting in $\sum_{j=3}^{n-2}(j-2)\gamma(n,K-1,4)$
events. If $j=n-1$, the dot arc only has one
end point, and the three end points must be occupied, generating $(j-2)\gamma(n,K-1,3)=(n-3)\gamma(n,K-1,3)$
events. Combining $3\leq j\leq n-2$ and $j=n-1$, we
can find $\sum_{j=3}^{n-2}(j-2)\gamma(n,K-1,4)+(n-3)\gamma(n,K-1,3)$ such events,
for $4\leq n\leq\frac{N}{4}+1$ and suitable $i$.

Then we consider the case of $\frac{N}{4}+2\leq n\leq\frac{N}{2}$. Given such $n$,
the length of pentagon arc is constrained by $n-\frac{N}{4}\le j\leq\frac{N}{4}$,
which excludes the chance of $j=n-1$. Then referring to the argument
above for $j\neq n-1$, we get $\sum_{j=n-N/4}^{N/4}(j-2)\gamma(n,K-1,4)$ events.

Finally, we note that no matter whichever type-II case above happens, $i$ can take $n+\frac{N}{2}-1 - (\frac{N}{2}+1)+1 = n-1$ choices (cf.~$\frac{N}{2}+1\leq i\leq n+\frac{N}{2}-1$ for $n_{\delta,1}$ as shown in Fig.~\ref{fig:Type-2}). In summary, combining all cases above for type II, and summing
over the $n-1$ choices of $i$, and over either $2\leq n\leq\frac{N}{4}+1$
or $\frac{N}{4}+2\leq n\leq\frac{N}{2}$, we get the number of type-II events
\begin{eqnarray}
f_{2}(N,K,3) & = & (2^{K-1}-1)+\sum_{n=3}^{N/4+1}\sum_{i=N/2+1}^{n+N/2-1}\left[2(n-3)\gamma(n,K-1,3)+3\gamma(n,K-1,2)\right]\nonumber \\
 &  & +\sum_{n=4}^{N/4+1}\sum_{i=N/2+1}^{n+N/2-1}\left[\sum_{j=3}^{n-2}(j-2)\gamma(n,K-1,4)+(n-3)\gamma(n,K-1,3)\right] + \nonumber \\
 &  & \sum_{n=N/4+2}^{N/2}\sum_{i=N/2+1}^{n+N/2-1}\left(2(\frac{N}{2}-n+1)\gamma(n,K-1,3)+\sum_{j=n-N/4}^{N/4}(j-2)\gamma(n,K-1,4)\right)\nonumber \\
\end{eqnarray}

In summary, the number of events which contains no BIA-feasible 3-tuple
is obtained by adding $f_{1}(N,K,3)$ and $f_{2}(N,K,3)$, which proves the
theorem.
\end{IEEEproof}

By using the theorem above, we can prove the following theorem.
\begin{thm}
Given a homogeneous $2\times 1$ broadcast network with $K$ users and $N$ the coherence time subject to $\tfrac{N}{4}$ being an integer, let $P(N,K,3)$ be the probability that the
transmitter finds, among the $K$ users, three users to form a BIA-feasible $3$-user $2\times 1$ BC. Then
\begin{equation}\label{eq:P.N.K}
P(N,K,3)=1-\frac{f(N,K,3)}{N^{K-1}}.
\end{equation}
\end{thm}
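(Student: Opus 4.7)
The plan is to obtain the identity as a direct complementary-counting consequence of the previous theorem, with essentially no additional combinatorial work. First I would fix $n_{\delta,1}=0$ as the benchmark and invoke the modeling assumption that $n_{\delta,2},\ldots,n_{\delta,K}$ are independent and uniformly distributed over $\{0,1,\ldots,N-1\}$. This makes the sample space of offset tuples a set of size $N^{K-1}$, each outcome having probability $\tfrac{1}{N^{K-1}}$.

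Second, I would identify the event whose probability is $P(N,K,3)$. Let $\mathcal{E}$ denote the event that there exist three indices $i,j,k\in\{1,\ldots,K\}$ for which $|n_{\delta,a}-n_{\delta,b}|\geq\lceil N/4\rceil$ holds for every distinct pair $a,b\in\{i,j,k\}$ (on the length-$N$ ring of Fig.~\ref{fig:3.user.ring}). By Lemma~\ref{lem:3.user.no.benchmark}, any such triple immediately yields a BIA-feasible $3$-user $2\times 1$ BC via Theorem~\ref{thm:feasible.region.s}, so $\mathcal{E}$ is exactly the event that the transmitter succeeds in selecting a feasible triple. Conversely, the complement $\mathcal{E}^{c}$ is the collection of offset tuples in which \emph{no} three users meet this pairwise ring-separation condition, which is precisely what the preceding theorem enumerates as $f(N,K,3)$.

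Combining the two paragraphs, the calculation collapses to
\begin{equation}
P(N,K,3)=\Pr(\mathcal{E})=1-\Pr(\mathcal{E}^{c})=1-\frac{|\mathcal{E}^{c}|}{N^{K-1}}=1-\frac{f(N,K,3)}{N^{K-1}},
\end{equation}
which is the desired formula.

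The only point that requires a brief sanity check, rather than real work, is the match between the event $\mathcal{E}^{c}$ and the set counted by $f(N,K,3)$: the Type~I and Type~II ball-box cases in the previous theorem were constructed precisely so that every configuration placed into them violates the $\lceil N/4\rceil$ arc-separation criterion for every triple, and every violating configuration is placed into exactly one case. Since the ball-box partition and the ring-separation criterion are set up in lockstep, no double counting or omission arises, and the probability identity follows without further obstacle.
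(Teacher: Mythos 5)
Your proposal is correct and follows essentially the same route as the paper, which simply observes that the $N^{K-1}$ equiprobable offset tuples make the identity an immediate complementary-counting consequence of the enumeration of $f(N,K,3)$. Your extra remark identifying $\mathcal{E}^{c}$ with the set counted by $f(N,K,3)$ only makes explicit what the paper leaves implicit (including its tacit use of Lemma~\ref{lem:3.user.no.benchmark} as if it were an exact, rather than merely sufficient, criterion), so there is no substantive difference.
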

\begin{proof}
Based on the fact that the number of possible $\{n_{\delta,i}\}_{i=2}^K$ events is $N^{K-1}$ and $n_{\delta,i}$ is uniformly distributed for all $2\leq i \leq K$, the proof of the theorem is quite straightforward.
\end{proof}

\section{Results and discussions}

\begin{figure}
\begin{centering}
\includegraphics[scale=0.9]{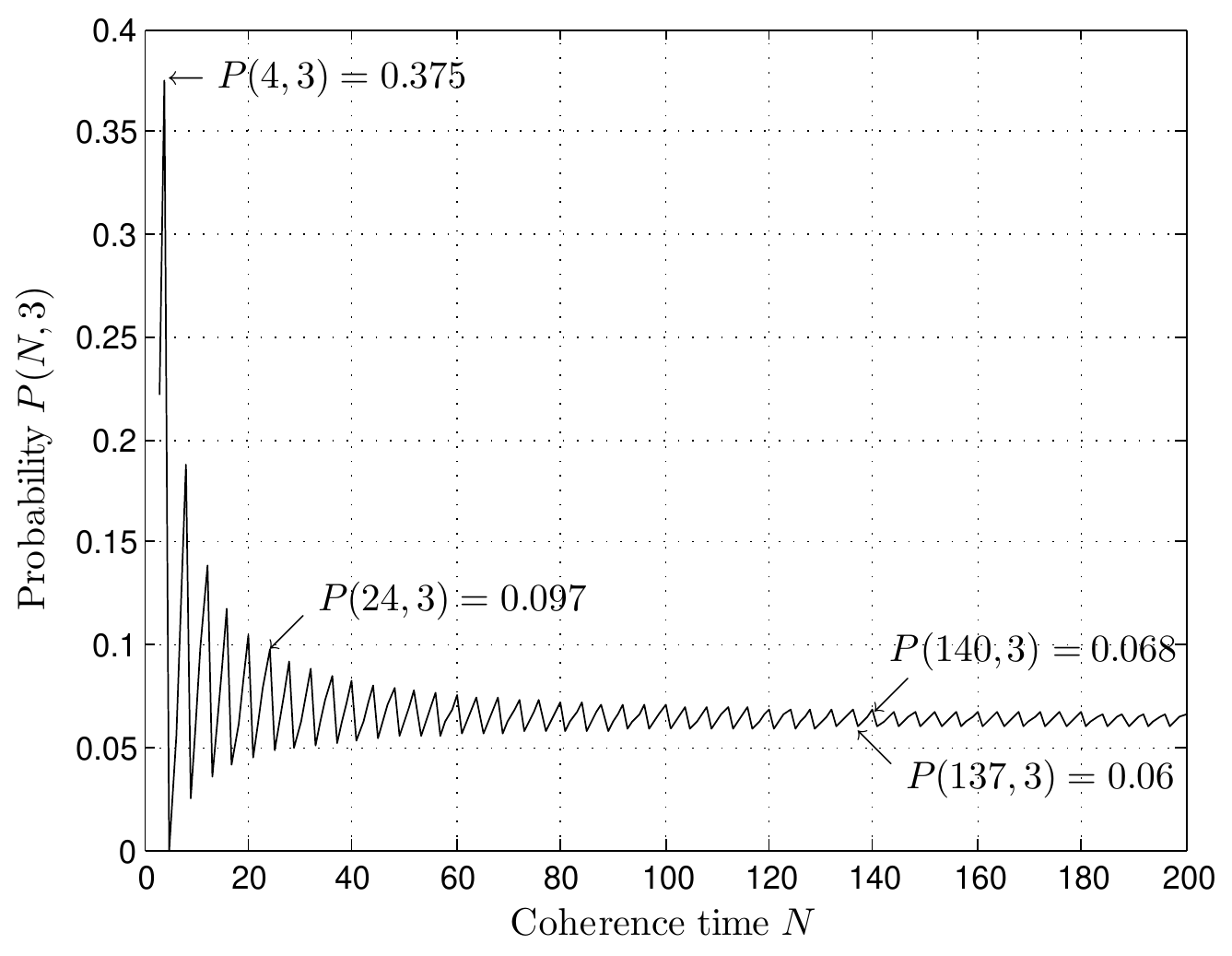}
\par\end{centering}
\caption{Probability $P(N,3)$ as a function of coherence time $N$.}
\label{fig:Ncoherence}
\end{figure}

\begin{figure}
\begin{centering}
\includegraphics[scale=0.9]{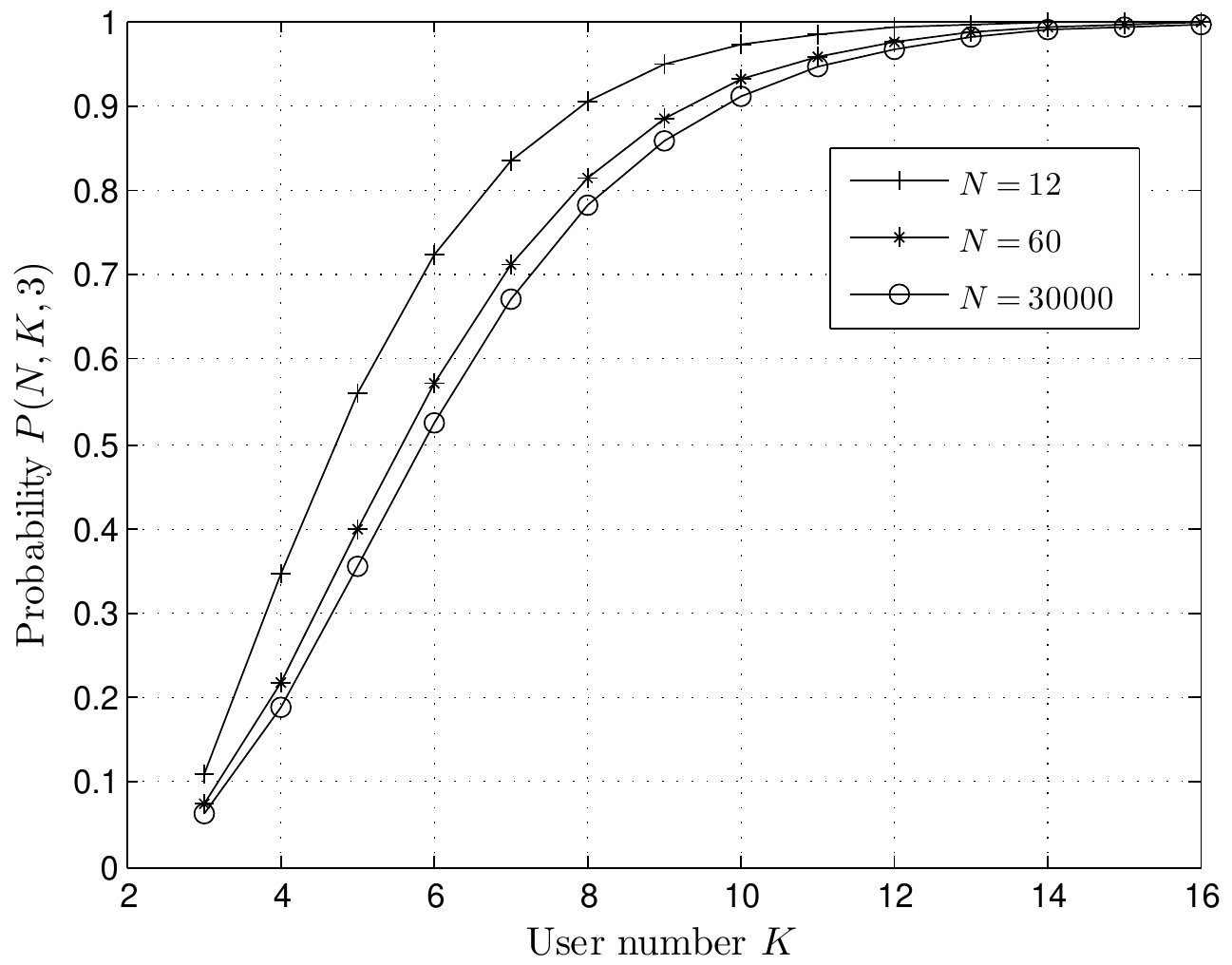}
\par\end{centering}
\caption{Probability $P(N,K,3)$ as a function of user number $K$ for different coherence time $N$.}
\label{fig:Kuser}
\end{figure}

Given a homogeneous $3$-user $2\times 1$ BC, Fig.~\ref{fig:Ncoherence} shows its BIA-feasible probability $P(N,3)$ as a function of coherence time $N$. When $N=4$, the BC has the maximal BIA-feasible probability $P(4,3)=0.375$. Rather than a monotonic function of $N$, the probability $P(N,3)$ forms a Cauchy sequence over $N$ and converges to a constant value when $N$ goes large. In specific, the probability limit $P(\infty,3)$ falls in $(P(137,3),P(140,3))=(0.06,0.068)$.

Although a homogeneous $3$-user $2\times 1$ BC has a small chance to be BIA-feasible, as presented above, it is expected that the probability increases if the three users can be selected favorably from a number of independent homogeneous users. Quantitatively, Fig.~\ref{fig:Kuser} shows the probability $P(N,K,3)$ as a function of user number $K$ for $N=12,60,3000$. It is observed that, when fixing $N$,
the probability of finding 3 users among $K$ users to form a BIA-feasible BC
increases monotonically with $K$. When $N\geq 60$, the $P(N,K,3)$ curve is close to the $P(\infty,K,3)$ curve, which is fairly portrayed by $P(30000,K,3)$. Based on the $P(30000,K,3)$ curve, we observe that the probability is larger than $95\%$ for one to find three users from the $K$ users to form a BIA-feasible $3$-user BC when $K=11$. The probability is almost one when $K\geq 15$.

\section{Conclusion}

In summary, this paper shows a sufficient condition for a homogeneous $3$-user $2\times 1$ BC to be BIA-feasible, that is, to achieve the optimal $\frac{3}{2}$ DoF by using interference alignment without the need of CSIT. Considering a homogeneous $2\times 1$ broadcast network in which the coherence blocks of $K\geq 11$ users are randomly offset with uniform distribution, there is more than 95\% chance that the two-antenna transmitter can find three users to form a BIA-feasible $3$-user BC. The results in this paper are also applicable to the $2\times 3$ X channel since the $3$-user $2\times 1$ BC has the same system model as the $2\times 3$ X channel, in which two single-antenna transmitters broadcast to three single-antenna receivers.

\end{document}